\newtheorem{theorem}{Theorem}[section]
\newtheorem{corollary}[theorem]{Corollary}
\newtheorem{lemma}[theorem]{Lemma}
\newtheorem{proposition}[theorem]{Proposition}
\newtheorem{definition}[theorem]{Definition}
\newtheorem{properties}[theorem]{Properties}
\newtheorem{remark}[theorem]{Remark}
\numberwithin{equation}{section}
\DeclareMathOperator*{\supess}{sup\,ess}
	\author {Ahmed Saoudi $^{ \orcidlink{0000-0003-4422-2054}}$}}
	\title{Quadratic-Phase Fourier--Bessel Transform: definitions, properties and uncertainty principles}
	\date{}
\begin{document}
		\maketitle
		\begin{center}
	Department of Mathematics, College of Science, Northern Border University, Arar, Saudi Arabia\\
			Email: ahmed.saoudi@ipeim.rnu.tn\\
			
		\end{center}
			\begin{abstract}
					In this manuscript, we introduce the quadratic-phase Fourier–Bessel transform and develop its foundational properties, including continuity, the Riemann–Lebesgue lemma, reversibility, and Parseval's identity. We define the associated translation operator and convolution product, establishing their main properties within this framework. As an application, we prove a Donoho–Stark-type uncertainty principle for the quadratic-phase Fourier–Bessel transform, extending classical uncertainty results to this generalized setting. 
					\\
					\\
						\textbf{ Keywords}. Quadratic-Phase Fourier--Bessel Transform; Quadratic-Phase Fourier--Bessel Translation, Quadratic-Phase Fourier--Bessel convolution, Donoho-stark uncertainty principle.	
				\\	\textbf{Mathematics Subject Classification}. Primary  47G10; Secondary 42B10.
				\end{abstract}
		
		\section{Introduction}
Over the past few decades, the classical Fourier transform has been extended and generalized in several directions, leading to important operators such as the fractional Fourier transform \cite{bahba2024fractional, condon1937immersion,  haouala2022fractional, mcbride1987namias, wiener1929hermitian},the linear canonical transform \cite{collins1970lens, haouala2025linear, saoudi2024hardy, wiener1929hermitian, wolf1974canonical} and many others. More recently, Castro et al, introduced the quadratic phase Fourier transform as a further generalization. This transform is defined through an exponential kernel with a quadratic phase, and it unifies and extends several well-known integral transforms

The quadratic phase Fourier transform is a powerful generalization of the classical Fourier transform, designed to address the limitations of traditional Fourier analysis in handling non-stationary signals. Unlike the Fourier transform, which assumes that signals are stationary (i.e., their frequency content does not change over time), the     quadratic phase Fourier transform incorporates quadratic-phase terms, making it particularly effective for analyzing signals with time-varying frequency content, such as chirp signals, radar pulses, and optical wavefronts.

The     quadratic phase Fourier transform of a function \( f(t) \) is defined as:

\[
\mathcal{Q}^{a,b,c,}_{d,e}[f](u) = \int_{-\infty}^\infty K^{a,b,c,}_{d,e}(t, u) f(t) \, dt,
\]
where the kernel \( K^{a,b,c,}_{d,e}(t, u) \) is given by:

\[
K^{a,b,c,}_{d,e}(t, u) = \frac{1}{\sqrt{2i\pi b}} \, e^{i\left(at^2 + bt u + cu^2 + dt + eu\right)},
\]
with, \( a, b, c, d, e \) are real parameters that control the quadratic and linear phase terms. The parameter \( b \neq 0 \) ensures the transform is well-defined.

The quadratic phase Fourier transform was developed and studied in various setting, including wavelet transform \cite{prasad2020quadratic, shah2022quadratic, sharma2023quadratic}, uncertainty principles \cite{castro2023uncertainty, castro2025multidimensional, shah2021uncertainty},  quaternion domains 
\cite{ahmad2025short, bhat2024novel, gargouri2024novel, gupta2024short,  zayed2025discrete,  dar2207towards}, and octonion domains \cite{kumar2024octonion, lone2025analysis}.

The  quadratic phase Fourier transform was introduced as a unifying framework that generalizes several well-known transforms, including the Fourier transform, the fractional Fourier transform, and the linear canonical transform. By incorporating additional parameters that control quadratic and linear phase terms, the     quadratic phase Fourier transform provides greater flexibility in signal analysis and has found applications in diverse fields such as signal processing, optics, quantum mechanics, and image processing.

In the present article, we introduce and study the  quadratic-phase Fourier--Bessel  transform that depend on five real parameters \( a, b, c, d, e \) and the indice $ \gamma > -1/2$, where $b\neq 0$. We denote this new transformation by $\mathbb{B}^{a,b,c}_{d,e,\gamma}$ which extends a broad range of integral transform families by selecting appropriate parameters.

We define the  quadratic-phase Fourier--Bessel  transform of an integrable function $h$ on the half line as 

\begin{equation}\label{QF.FBT}  
	\mathbb{B}^{a,b,c}_{d,e,\gamma}[h](t)= \frac{c_\gamma}{( i b)^{\gamma+1}} \int_{0}^\infty \mathbb{J}^{a,b,c}_{d,e,\gamma}(t,s) h(s)s^{2 \gamma+1} d s , \quad b \neq 0,
\end{equation}
where $\mathbb{J}^{a,b,c}_{d,e,\gamma}(t,s)$ denotes the quadratic-phase Fourier--Bessel kernel,  given by

\begin{equation}\label{QF.FBTK}
	\mathbb{J}^{a,b,c}_{d,e,\gamma}(t,s)= e^{-i(as^2+ct^2+ds+et)} j_{\gamma}\left(st/b\right).
\end{equation}
where $j_{\gamma}$ represent the normalized  Bessel function given by
\begin{equation}\label{seriesb}
	j_{\gamma}(s):=2^{\gamma} \Gamma(\gamma+1) \frac{J_{\gamma}(s)}{s^{\gamma}}=\Gamma(\gamma+1) \sum_{n=0}^{+\infty} \frac{(-1)^{n}(s / 2)^{2 n}}{n ! \Gamma(n+\gamma+1)},
\end{equation}
and the constant $	c_\gamma$ is given by
\begin{equation}\label{constc_gamma}
	c_\gamma= \frac{1}{2^{\gamma}\Gamma(\gamma+1)}.
\end{equation}
Here $J_{\gamma}$ denotes the Bessel function (see \cite{Watson1944}).

With suitable choices of parameters  \( a, b, c, d, e \) and multiplicity function $\gamma$, the  quadratic-phase Fourier--Bessel  transform (\ref{QF.FBT}) reduces to several well-known integral transforms, as shown below:
\begin{itemize}

 \item If $d=e=0$, $b\neq 0$,  and under  the transformations $a \mapsto - a/2b$ and $c \mapsto - c/2b$, then quadratic-phase Fourier--Bessel  transform  reduced to the  linear canonical Fourier--Bessel  transform   \cite{ dhaouadi2020harmonic, mohamed2024linear}, defined  by

\begin{equation}\label{LCFBT}
	\mathcal{L}^\mathcal{N}_{\gamma} [h](t)= \frac{c_\gamma}{( i b)^{\gamma+1}}\int_{0}^\infty \mathscr{K}_{\gamma}(t,s) h(s)s^{2 \gamma+1} d s,\quad   b \neq 0,
\end{equation}
where $\mathscr{K}_{\gamma}(t,s)$ represents the linear canonical Fourier--Bessel  kernel defined by 
\begin{equation}\label{LCFBT.K}
	 \mathscr{K}_{\gamma}(t,s)=e^{\frac{i}{2}\left(\frac{a}{b} s^{2}+\frac{d}{b} t^{2}\right)} j_{\gamma}\left(st/b\right).
\end{equation}

 \item If $d=e=0$,   and under  the transformations  $a \mapsto - \cot\theta/2$, $b=\sin\theta$ and $c \mapsto - \cot\theta/2$ with $\theta\in\mathbb{R}_+\textbackslash\pi\mathbb{Z}$, then quadratic-phase Fourier--Bessel  transform  reduced to the  fractional Fourier--Bessel  transform\cite{kerr1991fractional}, defined  by

\begin{equation}\label{frFBT}
	\mathcal{L}^{\theta}_{\gamma} [h](t)= \frac{c_\gamma c_\theta}{|\sin\theta|^{\gamma+1}}\int_{0}^\infty \mathscr{J}_{\gamma}(t,s) h(s)s^{2 \gamma+1} d s,
\end{equation}
where $\mathscr{J}_{\gamma}(t,s)$ represents the linear canonical Fourier--Bessel  kernel defined by 
\begin{equation}\label{frFBT.K}
	\mathscr{J}_{\gamma}(t,s)=e^{-\frac{i}{2} \cot (\theta)\left(s^2+t^2\right)}  j_{\gamma}\left(st/\sin (\theta)\right),
\end{equation}
and 
\begin{equation*}
    c_\theta =e^{i(\gamma+1)((\theta-2 n \pi)-\hat{\theta} \pi / 2)}, \quad \textrm{with}\quad \hat{\theta}=sgn(\sin\theta).
\end{equation*}

 \item If $a=c=d=e=0$,   and $b=1$, then the quadratic-phase Fourier--Bessel  transform  reduced to the   Fourier--Bessel  transform, except for a constant unimodular factor \( e^{-i(\mu+1) \pi/2} \), given by

\begin{equation}\label{FBT}
	\mathcal{B}_{\gamma} [h](t)= c_\gamma\int_{0}^\infty j_{\gamma}(st) h(s)s^{2 \gamma+1} d s,
\end{equation}
where the kernel $j_{\gamma}$ is given by (\ref{seriesb}).
\end{itemize}

In this work, we extend the framework of quadratic-phase Fourier analysis by introducing the quadratic-phase Fourier–Bessel transform, a new integral transform that combines quadratic phase modulation with the Bessel structure. We develop its foundational properties, including continuity, the Riemann–Lebesgue lemma, reversibility, and Parseval’s identity, which ensure that the transform is well defined and suitable for harmonic analysis. To build an operational calculus, we introduce the associated translation operator and convolution product, and we establish their main properties in this generalized setting. These results provide the analytic tools necessary for studying uncertainty principles in the new domain. As an application, we prove a Donoho–Stark-type uncertainty principle for the quadratic-phase Fourier–Bessel transform, thereby extending classical uncertainty results to a broader and more flexible transform framework.

The paper is organized as follows. In Section \ref{sec2}, we introduce the quadratic-phase Fourier–Bessel transform and establish its fundamental properties. Section \ref{sec3} is devoted to the definition of generalized translation operators associated with this transform and to the study of their main properties. In Section \ref{sec4}, we define the generalized convolution product in the quadratic-phase Fourier–Bessel setting and investigate its properties. Finally, in Section \ref{sec5}, we apply the developed framework to prove a Donoho–Stark-type uncertainty principle for the quadratic-phase Fourier–Bessel transform.

\section{Fundamental properties of  quadratic-phase Fourier--Bessel  transform}\label{sec2}

Along this article, We consider the following notations and functional spaces.
\begin{itemize}
	\item   \( \mathcal{C_{*}}(\mathbb{R}) \) denotes the space consisting of all continuous functions on \( \mathbb{R} \), and endowed with the topology of uniform convergence on compact subsets.
	
	\item  \( \mathcal{C}_{*, 0}(\mathbb{R}) \) is the set of all even continuous functions on \( \mathbb{R} \) that approach zero at infinity,  equipped with the topology of uniform convergence.
	
	\item  $L^{p}_{\gamma}(\mathbb{R}_+)$ denotes the Lebesgue space of measurable functions $h$ on $\mathbb{R}_+$ such that
	$$
	\begin{array}{l} 
		\|h\|_{\gamma,p}=\left(\int_{0}^\infty|f(s)|^{p} s^{2\gamma+1} ds\right)^{\frac{1}{p}}<\infty, \quad \text { if } 1 \leq p<\infty, \\
		\|h\|_{\gamma,\infty}=\displaystyle\supess_{s\in\mathbb{R}_+}|f(s)|<\infty, \quad \text { if } p=\infty,
	\end{array}
	$$
	provided with the topology defined by the norm $\|\cdot\|_{\gamma ,p}$, with $ \gamma > -1/2$.
	\item \(\mathcal{S}_*(\mathbb{R})\) denote the Schwartz space consisting of even, infinitely differentiable (\(\mathcal{C}^{\infty}\)) functions on \(\mathbb{R}\) that, along with all their derivatives, decay rapidly at infinity, equipped with the topology generated by the family of seminorms:
	
	\[
	q_{n, m}(h) = \sup_{s \geq 0} \left(1 + s^2\right)^n \left|\frac{\mathrm{d}^m}{\mathrm{d}s^m} h(s)\right|, \quad h \in \mathcal{S}_*(\mathbb{R}), \quad n, m \in \mathbb{N}.
	\]

\end{itemize} 

\begin{lemma} Let  \( a, b, c, d, e \)  five real parameters such that $b\neq 0$ and $ \gamma > -1/2$.
 Then for all  $v,w\in \mathbb{R}_+$, we have:	
		\begin{equation}\label{ModuleQFDTK}
			\left|\mathbb{J}^{a,b,c}_{d,e,\gamma}(t,s)\right|\leq 1.
		\end{equation}
\end{lemma}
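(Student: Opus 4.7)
The plan is to separate the kernel into its two multiplicative pieces and bound them individually. Since
\[
\mathbb{J}^{a,b,c}_{d,e,\gamma}(t,s)= e^{-i(as^2+ct^2+ds+et)}\, j_{\gamma}(st/b),
\]
and since the argument $as^2+ct^2+ds+et$ is real for real parameters and real $s,t$, the exponential factor has modulus exactly one. Consequently the inequality reduces to the pointwise estimate $|j_{\gamma}(x)|\le 1$ for every real $x$, applied at $x=st/b$.

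To prove this pointwise estimate on the normalized Bessel function, I would invoke the Poisson-type integral representation valid for $\gamma>-1/2$, namely
\[
j_{\gamma}(x)=\frac{\Gamma(\gamma+1)}{\sqrt{\pi}\,\Gamma(\gamma+1/2)}\int_{-1}^{1}(1-u^{2})^{\gamma-1/2}\,e^{ixu}\,du,
\]
which is a classical consequence of the series definition given in \eqref{seriesb} and the integral formula for $J_{\gamma}$ (see Watson \cite{Watson1944}). Taking absolute values inside the integral and using $|e^{ixu}|=1$ yields
\[
|j_{\gamma}(x)|\le \frac{\Gamma(\gamma+1)}{\sqrt{\pi}\,\Gamma(\gamma+1/2)}\int_{-1}^{1}(1-u^{2})^{\gamma-1/2}\,du = j_{\gamma}(0)=1,
\]
the last equality following from the Beta-function evaluation of the integral together with the normalization $j_{\gamma}(0)=1$ coming from \eqref{seriesb}.

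Combining the two steps gives the desired bound
\[
\left|\mathbb{J}^{a,b,c}_{d,e,\gamma}(t,s)\right|=\bigl|e^{-i(as^{2}+ct^{2}+ds+et)}\bigr|\,\bigl|j_{\gamma}(st/b)\bigr|\le 1.
\]
The only substantive point is the appeal to the Poisson representation to bound $j_{\gamma}$; everything else is immediate. An alternative, if one prefers to avoid the integral representation, is to use the known fact that $j_{\gamma}$ attains its maximum modulus on $\mathbb{R}$ at the origin for $\gamma>-1/2$, which can be derived from its series expansion together with the standard monotonicity of $J_{\gamma}(x)/x^{\gamma}$ near $0$; but the Poisson-integral route is the cleanest and requires no additional preparation.
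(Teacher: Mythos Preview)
Your proof is correct and follows essentially the same approach as the paper: both separate the kernel into the unimodular exponential factor and the normalized Bessel function, reducing the lemma to the bound $|j_{\gamma}(x)|\le 1$ for real $x$. The only difference is that the paper obtains this last bound by citing \cite[Corollary 5.4]{rosler1999positivity}, whereas you give a self-contained derivation via the Poisson integral representation, which is a perfectly valid (and arguably more transparent) alternative.
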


\begin{proof} 
For all \( a, b, c, d, e \in \mathbb{R}\) such that $b\neq 0$ and $ \gamma > -1/2$, we have 
\begin{align*}
  \left|\mathbb{J}^{a,b,c}_{d,e,\gamma}(t,s)\right|&= \left| e^{-i(as^2+ct^2+ds+et)} j_{\gamma}\left(st/b\right) \right|\\
  &\leq  \left| j_{\gamma}\left(st/b\right) \right| .
\end{align*}
According to \cite[Corollary 5.4]{rosler1999positivity}, we deduce that

\begin{equation}
	\left|\mathbb{J}^{a,b,c}_{d,e,\gamma}(t,s)\right|\leq 1.
\end{equation}

\end{proof}

\begin{theorem} 
Let \( a, b, c, d, e \in \mathbb{R}\) such that $b\neq 0$,  $ \gamma > -1/2$. 
\begin{enumerate}
 \item If  $h$ be a function belongs to  $L^{1}_{\gamma}(\mathbb{R}_+)$. Then the maps $t\mapsto\mathbb{B}^{a,b,c}_{d,e,\gamma}[h](t)$ is continuous on $\mathbb{R}_+$ and we have
 
 \begin{equation}
 	\left|\mathbb{B}^{a,b,c}_{d,e,\gamma}[h](t)\right|\leq \frac{c_\gamma}{ |b|^{\gamma+1}}	\|h\|_{\gamma,1}.
 \end{equation}
  \item For all $h\in \mathcal{S}_*(\mathbb{R})$, we have
\begin{equation}\label{QFFBTvsBT}
\mathbb{B}^{a,b,c}_{d,e,\gamma}[h](t) =  \frac{e^{-i(ct^2 + et)}}{(i b)^{\gamma+1}}  \mathcal{B}_{\gamma}\left[e^{-i(as^2 + ds)} h(s)\right]\left(\frac{t}{b}\right).
\end{equation}

\end{enumerate}
\end{theorem}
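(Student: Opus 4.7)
My plan is to treat the two assertions separately; both reduce to the preceding bound on the kernel and to routine manipulations.

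For part (1), the norm estimate is immediate from \eqref{ModuleQFDTK}: inserting the pointwise bound $|\mathbb{J}^{a,b,c}_{d,e,\gamma}(t,s)| \leq 1$ into the defining integral \eqref{QF.FBT} and using $|(ib)^{\gamma+1}| = |b|^{\gamma+1}$ yields the claimed inequality. Continuity of the map $t \mapsto \mathbb{B}^{a,b,c}_{d,e,\gamma}[h](t)$ will follow from the dominated convergence theorem. For any sequence $t_n \to t$ in $\mathbb{R}_+$, the integrand $\mathbb{J}^{a,b,c}_{d,e,\gamma}(t_n,s)\, h(s)\, s^{2\gamma+1}$ converges pointwise to $\mathbb{J}^{a,b,c}_{d,e,\gamma}(t,s)\, h(s)\, s^{2\gamma+1}$, because $t \mapsto e^{-i(as^2+ct^2+ds+et)}$ and $t \mapsto j_\gamma(st/b)$ are both continuous from the series representation \eqref{seriesb}. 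The sequence is uniformly dominated by $|h(s)|\, s^{2\gamma+1}$, which is integrable on $\mathbb{R}_+$ by hypothesis.

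For part (2), the argument is purely algebraic. I would substitute the kernel \eqref{QF.FBTK} into the definition \eqref{QF.FBT} and observe that the factor $e^{-i(ct^2+et)}$ is independent of the integration variable $s$ and can be pulled outside the integral. What remains inside is
\[
c_\gamma \int_0^\infty j_\gamma\!\left(\tfrac{st}{b}\right) \bigl[ e^{-i(as^2+ds)} h(s) \bigr]\, s^{2\gamma+1}\, ds,
\]
which by definition \eqref{FBT} is precisely $\mathcal{B}_\gamma\!\bigl[ e^{-i(as^2+ds)} h(s)\bigr](t/b)$. Dividing by the remaining factor $(ib)^{\gamma+1}$ gives the stated identity.

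There is no genuine obstacle in either step. The only minor point to verify for part (2) is that $s \mapsto e^{-i(as^2+ds)} h(s)$ lies in the domain of $\mathcal{B}_\gamma$; this is automatic since the modulating exponential has modulus one and $\mathcal{S}_*(\mathbb{R}) \subset L^1_\gamma(\mathbb{R}_+)$, so the rewriting is legitimate.
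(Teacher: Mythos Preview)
Your proposal is correct and follows essentially the same route as the paper: for part (1) you use the kernel bound \eqref{ModuleQFDTK} for the norm estimate and dominated convergence for continuity (the paper is slightly more terse here, merely citing continuity of $j_\gamma$), and for part (2) you perform the same algebraic factoring of $e^{-i(ct^2+et)}$ to recognize the remaining integral as $\mathcal{B}_\gamma$ evaluated at $t/b$. Your added remark that $e^{-i(as^2+ds)}h$ lies in the domain of $\mathcal{B}_\gamma$ is a welcome clarification not made explicit in the paper.
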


\begin{proof} We consider  \( a, b, c, d, e \in \mathbb{R}\) such that $b\neq 0$,  $ \gamma > -1/2$.
\begin{enumerate}
	\item Let $h$ be function in  $L^{1}_{\gamma}(\mathbb{R}_+)$. Since the maps $s\mapsto j_{\gamma}\left(st/b\right)$ is continuous on $\mathbb{R}_+$, we deduce that
	 $$w\mapsto\mathbb{B}^{a,b,c}_{d,e,\gamma}[h](s),$$ 
	  is also continuous maps on the half  line for every \( s \in \mathbb{R}_+ \). Furthermore,
	\begin{eqnarray*}
         	\left|\mathbb{B}^{a,b,c}_{d,e,\gamma}[h](s)\right| &=& \left| \frac{c_\gamma}{( i b)^{\gamma+1}} \int_{0}^\infty \mathbb{J}^{a,b,c}_{d,e,\gamma}(t,s) f(s)s^{2\gamma+1} ds \right| \\
         	&\leq&  \frac{c_\gamma}{ |b|^{\gamma+1}} \int_{0}^\infty \left|\mathbb{J}^{a,b,c}_{d,e,\gamma}(t,s) f(s)\right| s^{2\gamma+1} ds  \\
         		&\leq&  \frac{c_\gamma}{ |b|^{\gamma+1}} \int_{0}^\infty \left| f(s)\right| s^{2\gamma+1} ds  \\
         		& \leq& \frac{c_\gamma}{ |b|^{\gamma+1}}	\|h\|_{\gamma,1}.
	\end{eqnarray*}
	
	\item Let   $h$ be a function in $ \mathcal{S}\left(\mathbb{R}_+\right)$, then we have
	\begin{eqnarray*}
		\mathbb{B}^{a,b,c}_{d,e,\gamma}[h](s)&=& \frac{c_\gamma}{( i b)^{\gamma+1}} \int_{0}^\infty \mathbb{J}^{a,b,c}_{d,e,\gamma}(t,s) f(s)s^{2\gamma+1} ds
		\\
		&=& \frac{c_\gamma}{(i b)^{\gamma+1}} \int_{0}^\infty e^{-i(as^2 + ct^2 + ds + et)} j_{\gamma}\left(\frac{st}{b}\right) h(s) s^{2\gamma+1} ds
		\\
		&=& \frac{c_\gamma}{(i b)^{\gamma+1}} e^{-i(ct^2 + et)} \int_{0}^\infty e^{-i(as^2 + ds)} j_{\gamma}\left(\frac{st}{b}\right) h(s) s^{2\gamma+1} ds
		\\
		&=& \frac{e^{-i(ct^2 + et)}}{(i b)^{\gamma+1}}  \left(c_\gamma\int_{0}^\infty j_{\gamma}\left(\frac{st}{b}\right) \left(  e^{-i(as^2 + ds)}  h(s)\right)  s^{2\gamma+1} ds\right)
		\\
		&=& \frac{e^{-i(ct^2 + et)}}{(i b)^{\gamma+1}}  \mathcal{B}_{\gamma}\left[e^{-i(as^2 + ds)} h(s)\right]\left(\frac{t}{b}\right).
	\end{eqnarray*}
\end{enumerate}  
\end{proof}

We deduce immediately, from the above Theorem, the Riemann–Lebesgue lemma for the  quadratic-phase Fourier--Bessel  transform.
\begin{corollary}{(Riemann–Lebesgue lemma)}
Let  \( a, b, c, d, e \)  five real parameters such that $b\neq 0$ and $ \gamma > -1/2$. For every $h\in L^{1}_{\gamma}(\mathbb{R}_+)$, the quadratic-phase Fourier--Bessel  transform belongs $\mathbb{B}^{a,b,c}_{d,e,\gamma}[h]$ to  \( \mathcal{C}_{*, 0}(\mathbb{R}) \) such that

\begin{equation}\label{RLLQFFT}
	\left\|\mathbb{B}^{a,b,c}_{d,e,\gamma}[h](s)\right\|_{\gamma,\infty}\leq \frac{c_\gamma}{ |b|^{\gamma+1}}	\|h\|_{\gamma,1}.
\end{equation}

\end{corollary}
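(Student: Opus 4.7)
My plan is to deduce this corollary directly from the two parts of the preceding Theorem, essentially reducing the statement to the classical Riemann--Lebesgue lemma for the Fourier--Bessel transform $\mathcal{B}_\gamma$. Part (1) of that Theorem already delivers the continuity of $t \mapsto \mathbb{B}^{a,b,c}_{d,e,\gamma}[h](t)$ together with the pointwise estimate
\[
\left|\mathbb{B}^{a,b,c}_{d,e,\gamma}[h](t)\right| \leq \frac{c_\gamma}{|b|^{\gamma+1}}\|h\|_{\gamma,1},
\]
so the norm inequality (\ref{RLLQFFT}) follows at once by taking the supremum in $t$. It remains to establish the decay at infinity, which is the whole content of the Riemann--Lebesgue lemma.

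For this I would invoke the factorization (\ref{QFFBTvsBT}) from part (2) of the Theorem,
\[
\mathbb{B}^{a,b,c}_{d,e,\gamma}[h](t) = \frac{e^{-i(ct^2+et)}}{(ib)^{\gamma+1}}\,\mathcal{B}_\gamma\!\left[e^{-i(as^2+ds)} h(s)\right]\!\left(\frac{t}{b}\right).
\]
The identity (\ref{QFFBTvsBT}) is stated only for $h\in\mathcal{S}_*(\mathbb{R})$, so the first step is to extend it to every $h\in L^1_\gamma(\mathbb{R}_+)$ by a density argument. Both sides depend continuously on $h$ with respect to the $\|\cdot\|_{\gamma,1}$ norm: the left-hand side by the uniform estimate just recalled, and the right-hand side because $|e^{-i(as^2+ds)}|=1$, so $s\mapsto e^{-i(as^2+ds)}h(s)$ stays in $L^1_\gamma(\mathbb{R}_+)$ with the same norm, and $\mathcal{B}_\gamma$ is itself continuous from $L^1_\gamma$ to $L^\infty$. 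Since $\mathcal{S}_*(\mathbb{R})$ is dense in $L^1_\gamma(\mathbb{R}_+)$, the identity passes to the limit.

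With the factorization available for all $h\in L^1_\gamma(\mathbb{R}_+)$, the conclusion is immediate: the classical Riemann--Lebesgue lemma for the Fourier--Bessel transform gives $\mathcal{B}_\gamma[e^{-i(as^2+ds)}h]\in\mathcal{C}_{*,0}(\mathbb{R})$, hence it vanishes as its argument tends to infinity; and the remaining prefactor $e^{-i(ct^2+et)}/(ib)^{\gamma+1}$ has constant modulus $1/|b|^{\gamma+1}$, so multiplying by it preserves continuity and decay at infinity. Composing with the dilation $t\mapsto t/b$ (a homeomorphism of $\mathbb{R}$) also preserves membership in $\mathcal{C}_{*,0}(\mathbb{R})$, giving the claim.

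There is no serious obstacle here; the only step that requires any care is the density extension of (\ref{QFFBTvsBT}) from $\mathcal{S}_*(\mathbb{R})$ to $L^1_\gamma(\mathbb{R}_+)$, and even this reduces to noting that the unimodular phase factors do not disturb $L^1_\gamma$-norms and that $\mathcal{B}_\gamma$ is bounded from $L^1_\gamma$ into $L^\infty$.
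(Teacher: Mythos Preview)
Your argument is correct and matches the paper's reasoning: the paper states this corollary without proof, merely noting that it follows immediately from the preceding theorem, and your proposal supplies exactly those details. One minor simplification: the density extension of (\ref{QFFBTvsBT}) is not strictly needed, since the computation establishing that identity is a pure algebraic rearrangement of the defining integral and therefore holds verbatim for every $h\in L^1_\gamma(\mathbb{R}_+)$.
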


\begin{theorem} 
	Let \( a, b, c, d, e \in \mathbb{R}\) such that $b\neq 0$,  $ \gamma > -1/2$, $\alpha,\beta\in\mathbb{R}$ and $k\in\mathbb{R}_+$. 
  For every $h,g\in L^{1}_{\gamma}(\mathbb{R}_+)$, the quadratic-phase Fourier--Bessel  transform satisfies the following properties:
	\begin{enumerate}
		\item Linearity: $\mathbb{B}^{a,b,c}_{d,e,\gamma}[\alpha h+\beta g](s) = \alpha\mathbb{B}^{a,b,c}_{d,e,\gamma}[h](s) + \beta\mathbb{B}^{a,b,c}_{d,e,\gamma}[g](s)$.

		\item Scaling: $ \mathbb{B}^{a,b,c}_{d,e,\gamma}[h](k t)=  \frac{1}{k^{2\gamma+2}} \mathbb{B}^{a',b,c'}_{d',e',\gamma} [h_k](t) $, with   $ a' = \frac{a}{k^2}$, $c' = c k^2$, $d' = \frac{d}{k}$,	$e' = e k$ and 
		$ h_k(v) = h(v/k)$.
\end{enumerate}
\end{theorem}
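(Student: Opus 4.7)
The plan is to treat the two parts separately, both via direct calculation from the definition (\ref{QF.FBT}). Part~1 (linearity) is essentially immediate: the mapping $h \mapsto \mathbb{B}^{a,b,c}_{d,e,\gamma}[h](t)$ is defined by integration against the fixed kernel $\mathbb{J}^{a,b,c}_{d,e,\gamma}(t,s) s^{2\gamma+1}$, and linearity of the Lebesgue integral immediately yields the claim for $\alpha h + \beta g$ when $h,g \in L^1_\gamma(\mathbb{R}_+)$. Integrability of $\alpha h + \beta g$ follows from the triangle inequality.

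For Part~2 (scaling), the plan is to insert $kt$ in place of $t$ in the defining integral and perform the substitution $s = u/k$. First I would write
\begin{equation*}
\mathbb{B}^{a,b,c}_{d,e,\gamma}[h](kt) = \frac{c_\gamma}{(ib)^{\gamma+1}} \int_0^\infty e^{-i(as^2 + ck^2 t^2 + ds + ekt)}\, j_\gamma\!\left(\frac{s(kt)}{b}\right) h(s)\, s^{2\gamma+1}\, ds,
\end{equation*}
then change variables $s = u/k$, which contributes a Jacobian factor $1/k$ and turns $s^{2\gamma+1}$ into $u^{2\gamma+1}/k^{2\gamma+1}$, giving an overall prefactor $1/k^{2\gamma+2}$. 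The kernel of the Bessel function becomes $j_\gamma(ut/b)$, and the quadratic-phase exponent rearranges as $a(u/k)^2 = (a/k^2)u^2$, $ck^2 t^2 = c' t^2$, $d(u/k) = (d/k)u$, $ekt = e't$, which are precisely the prescribed parameter changes $a',c',d',e'$. Finally, $h(u/k) = h_k(u)$, and recollecting gives exactly $\tfrac{1}{k^{2\gamma+2}}\mathbb{B}^{a',b,c'}_{d',e',\gamma}[h_k](t)$.

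The only substantive issue is pure bookkeeping: one has to track which occurrences of $s$ and $t$ in the exponent get rescaled and verify that the factor $b$ in the Bessel argument is unchanged (since $s t = (u/k)(kt) = ut$, the parameter $b$ does not need to be modified, which is consistent with the statement). The integrability of $h_k$ in $L^1_\gamma(\mathbb{R}_+)$ under the rescaling is immediate from the same substitution, so no further justification of Fubini- or dominated-convergence-type is required.
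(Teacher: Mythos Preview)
Your proposal is correct and follows essentially the same approach as the paper: both parts are handled by direct computation from the definition (\ref{QF.FBT}), with linearity coming from linearity of the integral and the scaling property from the substitution $s=u/k$ together with the identification of the new parameters $a',c',d',e'$. The bookkeeping you describe matches the paper's calculation line by line.
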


\begin{proof} Let $h,g\in L^{1}_{\gamma}(\mathbb{R}_+)$,  \( a, b, c, d, e \in \mathbb{R}\) such that $b\neq 0$,  $ \gamma > -1/2$, $k,\beta\in\mathbb{R}$ and  $k\in\mathbb{R}_+$.
   	\begin{enumerate}
   	\item  Linearity: 
   		\begin{eqnarray*}
   		\mathbb{B}^{a,b,c}_{d,e,\gamma}[\alpha h+\beta g](s) &=& \frac{c_\gamma}{( i b)^{\gamma+1}} \int_{0}^\infty \mathbb{J}^{a,b,c}_{d,e,\gamma}(t,s) [\alpha h+\beta g](s)s^{2\gamma+1} ds
   		\\
   		 &=& \frac{ \alpha c_\gamma}{( i b)^{\gamma+1}} \int_{0}^\infty \mathbb{J}^{a,b,c}_{d,e,\gamma}(t,s) h(s)s^{2\gamma+1} ds\\
   	  && + \frac{ \beta c_\gamma}{( i b)^{\gamma+1}} \int_{0}^\infty \mathbb{J}^{a,b,c}_{d,e,\gamma}(t,s) g(s)s^{2\gamma+1} ds\\
   	   &=& \alpha\mathbb{B}^{a,b,c}_{d,e,\gamma}[h](s) + \beta\mathbb{B}^{a,b,c}_{d,e,\gamma}[g](s).
   	\end{eqnarray*}
 \item Scaling: we have  	
   	\begin{eqnarray*}
   		\mathbb{B}^{a,b,c}_{d,e,\gamma}[h](k t) &=& \frac{c_\gamma}{( i b)^{\gamma+1}} \int_{0}^\infty \mathbb{J}^{a,b,c}_{d,e,\gamma}(k t,s) h(s)s^{2 \gamma+1} d s
   		\\
   		&=& \frac{c_\gamma}{( i b)^{\gamma+1}} \int_{0}^\infty e^{-i(as^2+c(k t)^2+ds+e(k t))} j_{\gamma}\left(\frac{s(k t)}{b}\right) h(s)s^{2 \gamma+1} d s.\\
   \end{eqnarray*}	
   We make the change of variable $s = \frac{u}{k}$ to simplify the expression, we get
   \begin{eqnarray*}
   \mathbb{B}^{a,b,c}_{d,e,\gamma}[h](k t) &=& \frac{c_\gamma}{( i b)^{\gamma+1}} \int_{0}^\infty e^{-i \left( \frac{a}{k^2} u^2 + c k^2 t^2 + \frac{d}{k} u + e k t \right)} j_{\gamma} \left(\frac{u t}{b}\right) h\left(\frac{u}{k}\right) u^{2\gamma+1} du
   \\
   &=& \frac{1}{k^{2\gamma+2}}  \frac{c_\gamma}{( i b)^{\gamma+1}} \int_{0}^\infty \mathbb{J}^{a',b,c'}_{d',e',\gamma}(u,s) h_k(u) u^{2 \gamma+1} d u
   \\
   &=& \frac{1}{k^{2\gamma+2}} \mathbb{B}^{a',b,c'}_{d',e',\gamma} [h_k](t) ,
\end{eqnarray*}	
 with   $ a = \frac{a}{k^2}$, $c = c k^2$, $d = \frac{d}{k}$,	$e = e k$ and 
$ h_k(v) = h(v/k)$.
   \end{enumerate}

\end{proof}

\begin{theorem}{(The reversibility property)}\label{invthmQF.FBT}  Let   \( a, b, c, d, e \in \mathbb{R}\) such that $b\neq 0$, and $ \gamma > -1/2$. Let $h$ be a function belonging to $\in L^{1}_{\gamma}(\mathbb{R}_+)$, and  $\mathbb{B}^{a,b,c}_{d,e,\gamma}[h]$ belongs to $L^{1}_{\gamma}(\mathbb{R}_+)$. Then the inverse of  the quadratic-phase Fourier--Bessel  transform is given by

\begin{eqnarray}\label{InvQF.FBT} 
	    h(s) &=& \left( \mathbb{B}^{-c,-b,-a}_{-e,-d,\gamma} \left( \mathbb{B}^{a,b,c}_{d,e,\gamma}[h] \right)(t) \right)(s) \nonumber
	\\
	&=& \frac{c_\gamma}{( i b)^{\gamma+1}} \int_{0}^\infty \mathbb{J}^{-c,-b,-a}_{-e,-d,\gamma}(s,t) \mathbb{B}^{a,b,c}_{d,e,\gamma}[h](t) t^{2 \gamma+1} dt.
\end{eqnarray}
\end{theorem}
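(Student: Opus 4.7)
The plan is to reduce the inversion of $\mathbb{B}^{a,b,c}_{d,e,\gamma}$ to the classical Hankel inversion theorem $\mathcal{B}_{\gamma}\circ\mathcal{B}_{\gamma}=\mathrm{Id}$ by exploiting the intertwining identity (\ref{QFFBTvsBT}). First I would set
$$\phi(s):=e^{-i(as^{2}+ds)}\,h(s),$$
which belongs to $L^{1}_{\gamma}(\mathbb{R}_{+})$, and rewrite, via (\ref{QFFBTvsBT}),
$$g(t):=\mathbb{B}^{a,b,c}_{d,e,\gamma}[h](t)=\frac{e^{-i(ct^{2}+et)}}{(ib)^{\gamma+1}}\,\mathcal{B}_{\gamma}[\phi]\!\left(t/b\right).$$
The hypothesis $g\in L^{1}_{\gamma}(\mathbb{R}_{+})$ then translates, after the change of variable $u=t/b$, into $\mathcal{B}_{\gamma}[\phi]\in L^{1}_{\gamma}(\mathbb{R}_{+})$, which is precisely what is needed to apply the classical Hankel inversion theorem to $\phi$.

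Next I would expand $\mathbb{B}^{-c,-b,-a}_{-e,-d,\gamma}[g](s)$ directly from the definition (\ref{QF.FBT}). Using the evenness of $j_{\gamma}$, the new kernel reads
$$\mathbb{J}^{-c,-b,-a}_{-e,-d,\gamma}(s,t)=e^{i(ct^{2}+as^{2}+et+ds)}\,j_{\gamma}(st/b),$$
and its factor $e^{i(ct^{2}+et)}$ cancels exactly the factor $e^{-i(ct^{2}+et)}$ sitting inside $g(t)$. What remains is
$$\mathbb{B}^{-c,-b,-a}_{-e,-d,\gamma}[g](s)=\frac{e^{i(as^{2}+ds)}}{(-ib)^{\gamma+1}(ib)^{\gamma+1}}\,c_{\gamma}\!\int_{0}^{\infty}\! j_{\gamma}(st/b)\,\mathcal{B}_{\gamma}[\phi](t/b)\,t^{2\gamma+1}\,dt.$$
I would then perform the substitution $u=t/b$ inside this integral; the evenness of $j_{\gamma}$ absorbs the sign of $b$, and the Jacobian contributes a factor $|b|^{2\gamma+2}$, so the integral becomes $|b|^{2\gamma+2}\,\mathcal{B}_{\gamma}\bigl[\mathcal{B}_{\gamma}[\phi]\bigr](s)=|b|^{2\gamma+2}\,\phi(s)$ by Hankel inversion.

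To close the argument, I would use the identity
$$(-ib)^{\gamma+1}(ib)^{\gamma+1}=\bigl((-ib)(ib)\bigr)^{\gamma+1}=(b^{2})^{\gamma+1}=|b|^{2\gamma+2},$$
which cancels the Jacobian exactly; the surviving phase $e^{i(as^{2}+ds)}$ then undoes the multiplication defining $\phi$, leaving $h(s)$, as required. The main delicate point I expect is the bookkeeping of the complex prefactors together with the branch of $(ib)^{\gamma+1}$ for non-integer $\gamma+1$: the identity $(-ib)^{\gamma+1}(ib)^{\gamma+1}=(b^{2})^{\gamma+1}$ is safe because $(-ib)(ib)=b^{2}$ lies on the positive real axis, well clear of any branch cut, and the sign of $b$ in the substitution $u=t/b$ is neutralised by the evenness of $j_{\gamma}$. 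Once these technicalities are handled, the theorem reduces to the already-known Hankel inversion formula for $\mathcal{B}_{\gamma}$.
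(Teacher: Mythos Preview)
Your proposal is correct and follows essentially the same route as the paper: both reduce the statement to the classical Hankel inversion $\mathcal{B}_{\gamma}\circ\mathcal{B}_{\gamma}=\mathrm{Id}$ by stripping off the quadratic-phase factors, performing the substitution $t\mapsto t/b$, and cancelling the prefactor $(ib)^{\gamma+1}(-ib)^{\gamma+1}$ against the Jacobian $|b|^{2\gamma+2}$. Your version is slightly cleaner in that you invoke the intertwining identity (\ref{QFFBTvsBT}) directly rather than redoing the double-integral computation with Fubini, and you are explicit about the branch issue for $(ib)^{\gamma+1}(-ib)^{\gamma+1}$, which the paper leaves implicit.
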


\begin{proof} 

 Let $h$ be a function belonging to $\in L^{1}_{\gamma}(\mathbb{R}_+)$, we have
 
 \begin{eqnarray*}
 \left( \mathbb{B}^{-c,-b,-a}_{-e,-d,\gamma} \left( \mathbb{B}^{a,b,c}_{d,e,\gamma}[h] \right)(t) \right)(s) 
 	&=& \frac{c_\gamma}{( -i b)^{\gamma+1}} \int_{0}^\infty \mathbb{J}^{-c,-b,-a}_{-e,-d,\gamma}(s,t) \mathbb{B}^{a,b,c}_{d,e,\gamma}[h](t) t^{2 \gamma+1} dt
 	\\
 	&=& \frac{c_\gamma}{( -i b)^{\gamma+1}}  \int_{0}^\infty \mathbb{J}^{-c,-b,-a}_{-e,-d,\gamma}(s,t)   \bigg[ \frac{c_\gamma}{( i b)^{\gamma+1}} 
 	\\
 	&&\quad\times\int_{0}^\infty \mathbb{J}^{a,b,c}_{d,e,\gamma}(t,u) f(u)u^{2\gamma+1} du \bigg] t^{2 \gamma+1} dt.
  \end{eqnarray*}	
 	Using Fubini’s theorem, we may interchange the order of integration and obtain
 
   \begin{eqnarray*}
 	 \left( \mathbb{B}^{-c,-b,-a}_{-e,-d,\gamma} \left( \mathbb{B}^{a,b,c}_{d,e,\gamma}[h] \right)(t) \right)(s)
 	&=& \frac{c_\gamma^2}{( i b)^{\gamma+1}( -i b)^{\gamma+1}} \int_0^\infty \int_0^\infty  \mathbb{J}^{-c,-b,-a}_{-e,-d,\gamma}(s,t) \mathbb{J}^{a,b,c}_{d,e,\gamma}(t,u) 
 	\\ \\
 	&& \quad\times h(u) u^{2\gamma+1} t^{2\gamma+1} dt du.
 \end{eqnarray*}
By evaluating the product of the kernels, we obtain
 
   \begin{eqnarray*}
 \mathbb{J}^{-c,-b,-a}_{-e,-d,\gamma}(s,t)	\mathbb{J}^{a,b,c}_{d,e,\gamma}(t,u) 
  &=&  e^{i(ct^2+as^2+ds+et)} j_{\gamma}\left(-st/b\right)  e^{-i(au^2+ct^2+du+et)} j_{\gamma}\left(ut/b\right).
\end{eqnarray*}
 Using the fact that the normalized Bessel function $j_{\gamma}$ is even, we get
 
   \begin{eqnarray*}
	\mathbb{J}^{-c,-b,-a}_{-e,-d,\gamma}(s,t)	\mathbb{J}^{a,b,c}_{d,e,\gamma}(t,u) 
 &=&   e^{i(as^2+ds)} e^{-i(au^2+du)} j_{\gamma}\left(st/b\right) j_{\gamma}\left(ut/b\right).
 \end{eqnarray*}
 Hence, we obtain
 
  \begin{eqnarray*}
 \left( \mathbb{B}^{-c,-b,-a}_{-e,-d,\gamma} \left( \mathbb{B}^{a,b,c}_{d,e,\gamma}[h] \right)(t) \right)(s)
 &=& \frac{c_\gamma^2}{( i b)^{\gamma+1}( -i b)^{\gamma+1}}  e^{i(as^2+ds)} \int_0^\infty e^{-i(a u^2 + d u)} h(u)  \\
 &&\quad\times \left[  \int_0^\infty j_{\gamma}\left(st/b\right) j_{\gamma}\left(ut/b\right)  t^{2\gamma+1} dt  \right] u^{2\gamma+1} du.
 \end{eqnarray*}
By performing the change of variables $v = t/b$, we obtain

\begin{eqnarray*}
	\left( \mathbb{B}^{-c,-b,-a}_{-e,-d,\gamma} \left( \mathbb{B}^{a,b,c}_{d,e,\gamma}[h] \right)(t) \right)(s)
	&=& c_\gamma^{2} e^{i(as^2+ds)} \int_0^\infty e^{-i(a u^2 + d u)} h(u)  \\
	&& \quad \times \left[  \int_0^\infty j_{\gamma}(s v) j_{\gamma}(u v) v^{2\gamma+1} dv  \right]
	 u^{2\gamma+1} du
	 \\
	 &=& c_\gamma^{2} e^{i(as^2+ds)} \int_0^\infty  j_{\gamma}(s v) v^{2\gamma+1}      
	  \\
	 && \quad \times \left[  \int_0^\infty  j_{\gamma}(u v)  e^{-i(a u^2 + d u)} h(u) u^{2\gamma+1} du  \right]
	 dv.
	 \\  &=& c_\gamma e^{i(as^2+ds)} \int_0^\infty  j_{\gamma}(s v) 	\mathcal{B}_{\gamma} [H](v) v^{2\gamma+1} dv
 \end{eqnarray*}
where $H(u)= e^{-i(a u^2 + d u)} h(u)$.

Consequently, we obtain

\begin{eqnarray*}
	\left( \mathbb{B}^{-c,-b,-a}_{-e,-d,\gamma} \left( \mathbb{B}^{a,b,c}_{d,e,\gamma}[h] \right)(t) \right)(s)
	&=&   e^{i(as^2+ds)} \left[	\mathcal{B}_{\gamma} \left( \mathcal{B}_{\gamma} [H](s)\right) \right] \\
		&=&   e^{i(as^2+ds)} \left[  H(s) \right] \\
			&=&   e^{i(as^2+ds)} e^{-i(a s^2 + d s)} h(s)\\
			&=& h(s).
 \end{eqnarray*}

Therefore,
\[
\mathbb{B}^{-c,-b,-a}_{-e,-d,\gamma} \left( \mathbb{B}^{a,b,c}_{d,e,\gamma}[h] \right) (s) = h(s).
\]

This confirms that the quadratic-phase Fourier--Bessel transform is invertible, with the inverse given by the relation

\[
\mathbb{B}^{-c,-b,-a}_{-e,-d,\gamma} \circ \mathbb{B}^{a,b,c}_{d,e,\gamma} = id.
\]

\end{proof}   

Now we present and demonstrate the Parseval identity for the quadratic phase Fourier--Bessel transform.

\begin{theorem}[Parseval’s formula]
	Let \( a, b, c, d, e \in \mathbb{R}\),  $\gamma \geq -1/2$ and  $h$ and $g$ be two functions belonging to $L^{2}_{\gamma}(\mathbb{R}_+)$, then we have
	
	\begin{equation}\label{PFQFDT}
		\int_{\mathbb{R}_+} h(s) \overline{g(s)} |s|^{2\gamma+1} ds = \int_{\mathbb{R}_+} \mathbb{B}^{a,b,c}_{d,e,\gamma}[h](t) \overline{\mathbb{B}^{a,b,c}_{d,e,\gamma}[g](t)} t^{2\gamma+1} d t.
	\end{equation}
\end{theorem}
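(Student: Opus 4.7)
The plan is to reduce Parseval's identity for $\mathbb{B}^{a,b,c}_{d,e,\gamma}$ to the classical Parseval identity for the Fourier--Bessel transform $\mathcal{B}_{\gamma}$ by means of the intertwining relation (\ref{QFFBTvsBT}) established earlier. First I would observe that, for $h,g$ in a suitable dense subspace of $L^{2}_{\gamma}(\mathbb{R}_+)$ (for instance $\mathcal{S}_*(\mathbb{R})$, on which (\ref{QFFBTvsBT}) is already proved), one has
\[
\mathbb{B}^{a,b,c}_{d,e,\gamma}[h](t)\,\overline{\mathbb{B}^{a,b,c}_{d,e,\gamma}[g](t)}
=\frac{1}{|b|^{2\gamma+2}}\,\mathcal{B}_{\gamma}[H]\!\left(\tfrac{t}{b}\right)\,\overline{\mathcal{B}_{\gamma}[G]\!\left(\tfrac{t}{b}\right)},
\]
where $H(s)=e^{-i(as^2+ds)}h(s)$ and $G(s)=e^{-i(as^2+ds)}g(s)$. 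The quadratic-phase prefactors $e^{-i(ct^2+et)}$ cancel against their conjugates, and the moduli of the constants $(ib)^{-(\gamma+1)}$ combine to give exactly $|b|^{-(2\gamma+2)}$.

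Next I would perform the change of variables $u=t/|b|$ in the right-hand integral of (\ref{PFQFDT}); using the evenness of $j_{\gamma}$ (which makes $\mathcal{B}_{\gamma}[H](t/b)$ depend only on $|t/b|$), together with $t^{2\gamma+1}\,dt=|b|^{2\gamma+2}u^{2\gamma+1}\,du$, the factor $|b|^{-(2\gamma+2)}$ is cancelled and one obtains
\[
\int_{0}^{\infty}\mathbb{B}^{a,b,c}_{d,e,\gamma}[h](t)\,\overline{\mathbb{B}^{a,b,c}_{d,e,\gamma}[g](t)}\,t^{2\gamma+1}\,dt
=\int_{0}^{\infty}\mathcal{B}_{\gamma}[H](u)\,\overline{\mathcal{B}_{\gamma}[G](u)}\,u^{2\gamma+1}\,du.
\]
Now I would apply the classical Parseval identity for the Fourier--Bessel transform $\mathcal{B}_{\gamma}$ on $L^{2}_{\gamma}(\mathbb{R}_+)$ to replace the right-hand side by $\int_0^\infty H(s)\overline{G(s)}\,s^{2\gamma+1}\,ds$. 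Since the unimodular factor $e^{-i(as^2+ds)}$ present in both $H$ and $G$ cancels against its conjugate, this integral coincides with $\int_0^\infty h(s)\overline{g(s)}\,s^{2\gamma+1}\,ds$, yielding (\ref{PFQFDT}).

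The only genuinely delicate point is the extension from $\mathcal{S}_*(\mathbb{R})$ to $L^{2}_{\gamma}(\mathbb{R}_+)$: one needs to know that the multiplication-by-unimodular-phase map $h\mapsto H$ is an isometry of $L^{2}_{\gamma}(\mathbb{R}_+)$, so that $\mathbb{B}^{a,b,c}_{d,e,\gamma}$ inherits from $\mathcal{B}_{\gamma}$ its status as a bounded (indeed isometric, up to the verified constants) operator on $L^{2}_{\gamma}(\mathbb{R}_+)$, and then conclude by density of $\mathcal{S}_*(\mathbb{R})$ in $L^{2}_{\gamma}(\mathbb{R}_+)$. Once this continuity argument is in place, both sides of (\ref{PFQFDT}) are continuous sesquilinear forms on $L^{2}_{\gamma}(\mathbb{R}_+)\times L^{2}_{\gamma}(\mathbb{R}_+)$ which agree on a dense subspace, and therefore coincide everywhere.
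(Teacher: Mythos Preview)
Your proposal is correct and follows essentially the same route as the paper: use relation~(\ref{QFFBTvsBT}) to write $\mathbb{B}^{a,b,c}_{d,e,\gamma}[h](t)$ as a unimodular phase times $(ib)^{-(\gamma+1)}\mathcal{B}_\gamma[H](t/b)$, observe that the $t$-dependent phases cancel in the product with the conjugate, substitute $z=t/b$ to absorb the factor $|b|^{-(2\gamma+2)}$, apply Parseval for $\mathcal{B}_\gamma$, and note that the unimodular factor $e^{-i(as^2+ds)}$ cancels in $H\overline{G}$. Your additional remark on the density argument (extending from $\mathcal{S}_*(\mathbb{R})$ to $L^2_\gamma(\mathbb{R}_+)$) is actually more careful than the paper, which applies (\ref{QFFBTvsBT}) directly without commenting on this point.
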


\begin{proof}
	Let  $h$ and $g$ be two functions in $L^{2}_{\gamma}(\mathbb{R}_+)$. Using the relation (\ref{QFFBTvsBT}) between the Fourier--Bessel transform and its quadratic phase version, we get
	
	\begin{equation*}
		\mathbb{B}^{a,b,c}_{d,e,\gamma}[h](t) = \frac{1}{( i b)^{\gamma+1}} e^{-i(ct^2+et)} \mathcal{B}_\gamma(h_1)(t / b),
	\end{equation*}
	where $h_1(s)=e^{-i(as^2+ds)}h(s),$ and 
	
	\begin{equation*}
		\overline{\mathbb{B}^{a,b,c}_{d,e,\gamma}[g](t)} = \frac{1}{( -i b)^{\gamma+1}} e^{i(ct^2+et)} \overline{\mathcal{B}_\gamma(g_1)(t / b)},
	\end{equation*}
	with $g_1(s)=e^{-i(as^2+ds)}g(s).$\\
	Subsequently, we have
	
	\begin{eqnarray*}
		\int_{\mathbb{R}_+} \mathbb{B}^{a,b,c}_{d,e,\gamma}[h](t) \overline{\mathbb{B}^{a,b,c}_{d,e,\gamma}[g](t)} t^{2\gamma+1} d t
		&=& \frac{1}{( i b)^{\gamma+1}}\frac{1}{( -i b)^{\gamma+1}}\int_{\mathbb{R}_+} \mathcal{B}_\gamma(h_1)(t / b) 
		\\ \\
		&& \times \overline{\mathcal{B}_\gamma(g_1)(t / b)} t^{2\gamma+1} d t
		\\
		&=& \frac{1}{|b|^{2\gamma+2}} \int_{\mathbb{R}_+} \mathcal{B}_\gamma(h_1)(t / b) \overline{\mathcal{B}_\gamma(g_1)(t / b)} t^{2\gamma+1} d t.
	\end{eqnarray*}
	Applying the change of variable $(z = t/ b)$, we get
	
	\begin{eqnarray*}
		\int_{\mathbb{R}_+} \mathbb{B}^{a,b,c}_{d,e,\gamma}[h](t) \overline{\mathbb{B}^{a,b,c}_{d,e,\gamma}[g](t)} t^{2\gamma+1} d t
		&=&  \int_{\mathbb{R}_+} \mathcal{B}_\gamma(h_1)(z) \overline{\mathcal{B}_\gamma(g_1)(z)} z^{2\gamma+1} d z.
	\end{eqnarray*}
	By the Plancherel theorem for the Fourier--Bessel transform, we get
	
	\begin{eqnarray*}
		\int_{\mathbb{R}_+} \mathbb{B}^{a,b,c}_{d,e,\gamma}[h](t) \overline{\mathbb{B}^{a,b,c}_{d,e,\gamma}[g](t)} t^{2\gamma+1} d t
		&=&  \int_{\mathbb{R}_+} h_1(z) \overline{g_1(z)} z^{2\gamma+1} d z
		\\
		&=&  \int_{\mathbb{R}_+} e^{-i(az^2+dz)}h(z) \overline{e^{-i(az^2+dz)}g(z)} z^{2\gamma+1} d z
		\\
		&=&  \int_{\mathbb{R}_+} h(z) \overline{g(z)} z^{2\gamma+1} d z.		
	\end{eqnarray*}
	This completes the verification of the identity, thereby concluding the proof.
\end{proof}

\begin{corollary}\label{PlanchQPDT_S}
	Let \( a, b, c, d, e \in \mathbb{R}$ with $b\neq 0$,  $\gamma \geq -1/2$, and $h$ be a function belonging to $\mathcal{S}_*(\mathbb{R})$. Then,
	
	\begin{equation*}
		\|\mathbb{B}^{a,b,c}_{d,e,\gamma}[h]\|_{\gamma,2}=  \|h\|_{\gamma,2}.
	\end{equation*}
\end{corollary}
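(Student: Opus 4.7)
The plan is to obtain this as an immediate specialization of the Parseval identity (\ref{PFQFDT}) to the diagonal case $g=h$. There is no substantive new content to prove: the entire work has been done in the preceding theorem, and this corollary is essentially a one-line deduction followed by a square root. The only thing worth checking carefully is that the hypothesis $h\in\mathcal{S}_*(\mathbb{R})$ places $h$ inside the function class for which Parseval's formula has been established.

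First I would verify the inclusion $\mathcal{S}_*(\mathbb{R})\subset L^{2}_{\gamma}(\mathbb{R}_+)$. For any $h\in \mathcal{S}_*(\mathbb{R})$ and any integer $n>\gamma+1$, the seminorm $q_{n,0}(h)$ gives the bound $|h(s)|\leq C_n(1+s^2)^{-n}$, so
\[
\|h\|_{\gamma,2}^{2}=\int_{0}^{\infty}|h(s)|^{2}s^{2\gamma+1}\,ds\leq C_n^{2}\int_{0}^{\infty}(1+s^2)^{-2n}s^{2\gamma+1}\,ds<\infty,
\]
which confirms that both $h$ and (taking $g=h$) the pairing required by Parseval's formula are well defined.

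Next I would apply the Parseval identity (\ref{PFQFDT}) with $g=h$, which directly yields
\[
\int_{\mathbb{R}_+}|h(s)|^{2}s^{2\gamma+1}\,ds=\int_{\mathbb{R}_+}\left|\mathbb{B}^{a,b,c}_{d,e,\gamma}[h](t)\right|^{2}t^{2\gamma+1}\,dt.
\]
Recognizing each side as the square of the corresponding $L^{2}_{\gamma}$-norm and taking positive square roots gives
\[
\|\mathbb{B}^{a,b,c}_{d,e,\gamma}[h]\|_{\gamma,2}=\|h\|_{\gamma,2},
\]
which is exactly the claimed identity.

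Since the corollary is a direct specialization, there is no genuine obstacle; the only minor subtlety is making sure $h\in\mathcal{S}_*(\mathbb{R})$ is admissible for Parseval's formula, which is handled by the decay estimate above. Consequently, the argument reduces to two lines: invoke (\ref{PFQFDT}) with $g=h$, then take square roots.
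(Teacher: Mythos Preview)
Your proposal is correct and matches the paper's intent: the corollary is stated without proof immediately after Parseval's formula, so it is meant to be the one-line specialization $g=h$ in (\ref{PFQFDT}) followed by taking square roots, exactly as you argue. Your extra verification that $\mathcal{S}_*(\mathbb{R})\subset L^{2}_{\gamma}(\mathbb{R}_+)$ is a reasonable precaution, though the paper leaves it implicit.
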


\begin{theorem}\label{PlanchQPDT}
	Let \( a, b, c, d, e \in \mathbb{R}$ such that $b\neq 0$,  $\gamma \geq -1/2$.
	If $h \in L^{1}_{\gamma}(\mathbb{R}_+) \cap L^{2}_{\gamma}(\mathbb{R}_+)$, then its quadratic-phase Fourier--Bessel transform $\mathbb{B}^{a,b,c}_{d,e,\gamma}[h]$ belongs to $L^{2}_{\gamma}(\mathbb{R}_+)$, and we have
	\begin{equation}\label{PlanchQPDT_L2}
		\|\mathbb{B}^{a,b,c}_{d,e,\gamma}[h]\|_{\gamma,2}=  \|h\|_{\gamma,2}.
	\end{equation}
\end{theorem}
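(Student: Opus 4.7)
My plan is to deduce this Plancherel identity directly from the intertwining relation \eqref{QFFBTvsBT} between the quadratic-phase Fourier--Bessel transform and the classical Fourier--Bessel transform, combined with the known Plancherel theorem for $\mathcal{B}_\gamma$ on $L^{2}_{\gamma}(\mathbb{R}_+)$. This mirrors the strategy already used for Parseval's formula, and in fact it simultaneously proves $L^{2}_{\gamma}$-membership of $\mathbb{B}^{a,b,c}_{d,e,\gamma}[h]$, which is part of the statement.

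First I would set $H(s) := e^{-i(as^2+ds)} h(s)$ and observe that since the multiplier has unit modulus, $|H(s)| = |h(s)|$, so $H \in L^{1}_{\gamma}(\mathbb{R}_+) \cap L^{2}_{\gamma}(\mathbb{R}_+)$ with $\|H\|_{\gamma,p} = \|h\|_{\gamma,p}$ for $p \in \{1,2\}$. By the classical Plancherel theorem for the Fourier--Bessel transform applied to $H$, we have $\mathcal{B}_\gamma[H] \in L^{2}_{\gamma}(\mathbb{R}_+)$ with $\|\mathcal{B}_\gamma[H]\|_{\gamma,2} = \|H\|_{\gamma,2}$.

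Next, I would exploit \eqref{QFFBTvsBT}, which gives the pointwise identity
\begin{equation*}
\bigl|\mathbb{B}^{a,b,c}_{d,e,\gamma}[h](t)\bigr| = \frac{1}{|b|^{\gamma+1}}\,\bigl|\mathcal{B}_\gamma[H](t/b)\bigr|,
\end{equation*}
since $|e^{-i(ct^2+et)}|=1$ and $|(ib)^{\gamma+1}| = |b|^{\gamma+1}$. Computing the $L^{2}_{\gamma}$-norm and using the change of variable $z = t/|b|$ (valid because $j_\gamma$ is even, so $\mathcal{B}_\gamma[H](t/b) = \mathcal{B}_\gamma[H](t/|b|)$ on $\mathbb{R}_+$), together with $t^{2\gamma+1}\,dt = |b|^{2\gamma+2}\, z^{2\gamma+1}\,dz$, yields
\begin{equation*}
\int_0^\infty \bigl|\mathbb{B}^{a,b,c}_{d,e,\gamma}[h](t)\bigr|^2 t^{2\gamma+1}\,dt = \int_0^\infty \bigl|\mathcal{B}_\gamma[H](z)\bigr|^2 z^{2\gamma+1}\,dz = \|H\|_{\gamma,2}^{2} = \|h\|_{\gamma,2}^{2},
\end{equation*}
which simultaneously establishes $\mathbb{B}^{a,b,c}_{d,e,\gamma}[h] \in L^{2}_{\gamma}(\mathbb{R}_+)$ and the desired norm equality.

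There is no serious obstacle: the argument is essentially a change of variables plus the known Fourier--Bessel Plancherel identity, and both tools are already available in the paper. The only bookkeeping point is handling the sign of $b$ in the substitution $z = t/b$, which is absorbed into the $|b|$ factors by the evenness of $j_\gamma$, so no issue arises when $b < 0$. If one prefers, Corollary \ref{PlanchQPDT_S} (the Schwartz-class version) together with the density of $\mathcal{S}_*(\mathbb{R})$ in $L^{2}_{\gamma}(\mathbb{R}_+)$ and a standard extension-by-continuity argument gives an alternative route, but the direct computation above seems more transparent and avoids additional density lemmas.
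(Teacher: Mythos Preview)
Your argument is correct and complete: the intertwining relation reduces the $L^{2}_{\gamma}$-norm of $\mathbb{B}^{a,b,c}_{d,e,\gamma}[h]$ to that of $\mathcal{B}_\gamma[H]$ via a dilation, and the classical Fourier--Bessel Plancherel theorem finishes the job. One small remark: in the paper, relation \eqref{QFFBTvsBT} is formally stated only for $h\in\mathcal{S}_*(\mathbb{R})$, but the computation in its proof uses nothing beyond $h\in L^{1}_{\gamma}(\mathbb{R}_+)$ (and the paper itself tacitly invokes it for $L^{2}_{\gamma}$ functions in the Parseval proof), so your use of it here is legitimate.

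The paper takes a different route. Rather than computing directly, it first invokes Corollary \ref{PlanchQPDT_S} (the isometry on $\mathcal{S}_*(\mathbb{R})$) and the density of $\mathcal{S}_*(\mathbb{R})$ in $L^{2}_{\gamma}(\mathbb{R}_+)$ to obtain a unique continuous extension $\mathbf{B}^{a,b,c}_{d,e,\gamma}$ on $L^{2}_{\gamma}(\mathbb{R}_+)$; it then argues, via a pairing identity, that for $h\in L^{1}_{\gamma}\cap L^{2}_{\gamma}$ the integral-defined transform $\mathbb{B}^{a,b,c}_{d,e,\gamma}[h]$ coincides a.e.\ with $\mathbf{B}^{a,b,c}_{d,e,\gamma}[h]$, from which both $L^{2}_{\gamma}$-membership and the norm equality follow. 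Your direct computation is shorter and more transparent, and it delivers $L^{2}_{\gamma}$-membership and the isometry in one stroke; the paper's density-and-extension argument is more roundabout but has the structural byproduct of exhibiting an $L^{2}_{\gamma}$-isometric extension of the transform (useful if one later wants to define $\mathbb{B}^{a,b,c}_{d,e,\gamma}$ on all of $L^{2}_{\gamma}$). You correctly anticipated this alternative at the end of your proposal.
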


\begin{proof} 
	Let \( a, b, c, d, e \in \mathbb{R}$ such that $b\neq 0$,  $\gamma \geq -1/2$.
	By Corollary \ref{PlanchQPDT_S} and the density of $\mathcal{S}_*(\mathbb{R})$ in $L^{2}_{\gamma}(\mathbb{R}_+)$, we obtain a unique continuous operator $\mathbf{B}^{a,b,c}_{d,e,\gamma}[h]$ on $L^{2}_{\gamma}(\mathbb{R}_+)$, which coincides with $\mathbb{B}^{a,b,c}_{d,e,\gamma}[h]$ on  $\mathcal{S}_*(\mathbb{R})$. Let $h$ and $g$ be two functions in $\mathcal{S}_*(\mathbb{R})$; then we have 
	
	\begin{eqnarray*}
		\int_{\mathbb{R}_+} \mathbf{B}^{a,b,c}_{d,e,\gamma}[h](t) \overline{\mathbb{B}^{a,b,c}_{d,e,\gamma}[g](t)} t^{2\gamma+1} d t
		&=& \int_{\mathbb{R}_+} \mathbb{B}^{a,b,c}_{d,e,\gamma}[h](t) \overline{\mathbb{B}^{a,b,c}_{d,e,\gamma}[g](t)} t^{2\gamma+1} d t  \\
		&=& \int_{\mathbb{R}_+} \mathbb{B}^{a,b,c}_{d,e,\gamma}[h](t) \overline{\mathbf{B}^{a,b,c}_{d,e,\gamma}[g](t)} t^{2\gamma+1} d t.
	\end{eqnarray*}
	Consequently, by the density of $\mathcal{S}_*(\mathbb{R})$ in $L^{2}_{\gamma}(\mathbb{R}_+)$, it follows that for all $h$ and $g$ in $L^{2}_{\gamma}(\mathbb{R}_+)$, 
	
	\begin{eqnarray*}
		\int_{\mathbb{R}_+} \mathbf{B}^{a,b,c}_{d,e,\gamma}[h](t) \overline{\mathbb{B}^{a,b,c}_{d,e,\gamma}[g](t)} t^{2\gamma+1} d t
		&=& \int_{\mathbb{R}_+} \mathbb{B}^{a,b,c}_{d,e,\gamma}[h](t) \overline{\mathbf{B}^{a,b,c}_{d,e,\gamma}[g](t)} t^{2\gamma+1} d t.
	\end{eqnarray*}
	
	Now, assume that $h$ is an element of $ L^{1}_{\gamma}(\mathbb{R}_+) \cap L^{2}_{\gamma}(\mathbb{R}_+)$ and $g$ belongs to $\mathcal{S}_*(\mathbb{R})$. Then, we have
	
	\begin{eqnarray*}
		\int_{\mathbb{R}_+} \mathbb{B}^{a,b,c}_{d,e,\gamma}[h](t) \overline{\mathbb{B}^{a,b,c}_{d,e,\gamma}[g](t)} t^{2\gamma+1} d t
		&=& \int_{\mathbb{R}_+} \mathbb{B}^{a,b,c}_{d,e,\gamma}[h](t) \overline{\mathbf{B}^{a,b,c}_{d,e,\gamma}[g](t)} t^{2\gamma+1} d t  \\
		&=& \int_{\mathbb{R}_+} \mathbf{B}^{a,b,c}_{d,e,\gamma}[h](t) \overline{\mathbb{B}^{a,b,c}_{d,e,\gamma}[g](t)} t^{2\gamma+1} d t.
	\end{eqnarray*}
	Therefore, $\mathbb{B}^{a,b,c}_{d,e,\gamma}[h]=\mathbf{B}^{a,b,c}_{d,e,\gamma}[h]$ almost everywhere. This shows that for every $h \in L^{1}_{\gamma}(\mathbb{R}_+) \cap L^{2}_{\gamma}(\mathbb{R}_+)$, its quadratic-phase Fourier--Bessel transform $\mathbb{B}^{a,b,c}_{d,e,\gamma}[h]$ is in $L^{2}_{\gamma}(\mathbb{R}_+)$. The equality (\ref{PlanchQPDT_L2}) follows directly from  Corollary \ref{PlanchQPDT_S}.
\end{proof}

\section{Quadratic-phase Fourier--Bessel  translation} \label{sec3}

\begin{definition}
	Let   \( a, b, c, d, e \in \mathbb{R}\) such that $b\neq 0$, and $ \gamma > -1/2$. . For $h \in \mathcal{C}_{*}(\mathbb{R}_+)$, we define the generalized translation operators associated with the quadratic-phase Fourier--Bessel transform by
	
	\begin{equation}\label{QFFBTtransl}
		  \mathbb{T} ^{a,b,d}_{t,\gamma}[h](s)= \int_0^\infty h(u) \mathbb{W}^{a,b,d}_{\gamma}(s,t,u) e^{au^2+du}u^{2\gamma+1} du,
	\end{equation}
	where the kernel $\mathbb{W}^{a,b,d}_{\gamma}(s,t,u)$ is given by

%
%\[
%(\tau_t^{a,b;d} \psi)(\omega) =
%\frac{1}{b^2 (2^\mu\Gamma(\mu+1))^3} \int_{0}^{\infty} \int_{0}^{\infty} \psi(z)\, j_{\mu}\left(\frac{ts}{b}\right) j_{\mu}\left(\frac{\omega s}{b}\right) j_{\mu}\left(\frac{zs}{b}\right)
%\]
%
%\[
%\times e^{-i[a(t^2+\omega^2+z^2) + d(t+\omega+z) - (a s^2 + d s)]} z^{1+2\mu} s\, ds dz.
%\]

	\begin{equation*}
		\mathbb{W}^{a,b,d}_{\gamma}(s,t,u)= \mathbb{W}_{\gamma}(s,t,u)e^{-i\left[ a(s^2+t^2+u^2) +d(s+t+u)\right]},
	\end{equation*}
	with  $\mathbb{W}_{\gamma}(s,t,u)$ is given by

\begin{equation*}\label{QFFBkernelWclassicTransl1}
	\mathbb{W}_{\gamma}(s,t,u)=\frac{2^{2\gamma-1} \Gamma(\gamma+1)}{\sqrt{\pi} \Gamma\left(\gamma+\frac{1}{2}\right)} \frac{\Delta^{2\gamma-1}}{(stu)^{2\gamma}} \chi_{[|s-t|, s+t]},
\end{equation*}
and represent the kernel of the translation operator associated with the Fourier--Bessel transform, given by \cite{levitan1951expansion}

	\begin{equation*}
		\mathcal{T}_{t,\gamma}[h](s)= \int_0^\infty h(u) \mathbb{W}_{\gamma}(s,t,u) u^{2\gamma+1} du,
	\end{equation*}
and
	
	$$
	\Delta=\frac{1}{4} \sqrt{(x+y+z)(x+y-z)(x-y+z)(y+z-x)},
	$$
	represents the area of a triangle with side lengths \( s, t, u > 0 \), and \( \chi_A \) denotes the characteristic function of the set \( A \).
	\end{definition}

 \begin{properties}
	Let   \( a, b, c, d, e \in \mathbb{R}\) such that $b\neq 0$, and $ \gamma > -1/2$. . For all $h,g \in \mathcal{C}_{*}(\mathbb{R}_+)$ and
	$\alpha,\beta\in\mathbb{R}$, the quadratic-phase Fourier--Bessel translation operator satisfies the following properties.
	\begin{enumerate}
		\item Identity: $  \mathbb{T} ^{a,b,d}_{0,\gamma}=Id$.
	    \item Symmetry: $  \mathbb{T} ^{a,b,d}_{t,\gamma}[h](s)=  \mathbb{T} ^{a,b,d}_{s,\gamma}[h](t)$.
	    \item Linearity: $  \mathbb{T} ^{a,b,d}_{t,\gamma}[\alpha h+\beta g](s)= \alpha   \mathbb{T} ^{a,b,d}_{t,\gamma}[h](s) + \beta  \mathbb{T} ^{a,b,d}_{t,\gamma}[g](s)$.
	    \item Compact support: If $h(s)=0$ for $y\geq r$, then $  \mathbb{T} ^{a,b,d}_{t,\gamma}[h](s)=0$, for all $|x-y| \geq r$.
	\end{enumerate}
	 \end{properties}

In the following proposition, we prove the continuity of the quadratic-phase Fourier--Bessel translation operator from  $L^{p}_{\gamma}(\mathbb{R}_+)$  into itself for all $p\in [1,\infty ]$.

\begin{proposition} For all function $h$ belongs to $L^{p}_{\gamma}(\mathbb{R}_+)$, $p\in [1,\infty ]$ and $s\in [0,\infty)$, the function $\mathbb{T} ^{a,b,d}_{t,\gamma}[h]$ belongs to $L^{p}_{\gamma}(\mathbb{R}_+)$ such that
	
\begin{equation}\label{contQFFBTransl}
 \|\mathbb{T} ^{a,b,d}_{t,\gamma}[h]\|_{\gamma,p} \leq \|h\|_{\gamma,p}.
\end{equation}
\end{proposition}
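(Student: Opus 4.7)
The plan is to reduce the stated $L^p$ bound to the well-known contractivity of the classical Bessel translation $\mathcal{T}_{t,\gamma}$ on $L^p_\gamma(\mathbb{R}_+)$ (see \cite{levitan1951expansion}). The first step is to unpack the definition (\ref{QFFBTtransl}) and group the phase $e^{-i[a(s^2+t^2+u^2)+d(s+t+u)]}$ hidden inside $\mathbb{W}^{a,b,d}_{\gamma}(s,t,u)$ with the companion modulation in the integrand, so that the $u$-dependent portion of the phase cancels and only factors depending on $s$ and $t$ survive outside the integral. This should yield the clean factorization
$$\mathbb{T}^{a,b,d}_{t,\gamma}[h](s) = e^{-i[a(s^2+t^2)+d(s+t)]}\,\mathcal{T}_{t,\gamma}\!\left[e^{i(a(\cdot)^2+d(\cdot))}\,h\right](s),$$
expressing the quadratic-phase Bessel translation as a unimodular phase times a classical Bessel translation applied to a modulated version of $h$.

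Next, I would take moduli. Since the classical translation kernel $\mathbb{W}_\gamma(s,t,u)$ is real and non-negative (a positive multiple of $\Delta^{2\gamma-1}(stu)^{-2\gamma}\chi_{[|s-t|,s+t]}(u)$), the extracted outer factor is unimodular, and the modulation factor inside is also unimodular, the triangle inequality applied inside the integral gives the pointwise domination
$$\left|\mathbb{T}^{a,b,d}_{t,\gamma}[h](s)\right| \leq \mathcal{T}_{t,\gamma}\big[|h|\big](s).$$

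Finally, I would raise to the $p$th power (or take essential supremum for $p=\infty$), integrate against $s^{2\gamma+1}\,ds$, and invoke the $L^p_\gamma$-contractivity of the Bessel translation, namely $\|\mathcal{T}_{t,\gamma}[|h|]\|_{\gamma,p}\leq \|h\|_{\gamma,p}$, to conclude inequality (\ref{contQFFBTransl}) uniformly in $t\geq 0$.

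The main obstacle, rather modest in scope, is the $L^p$-boundedness of the classical operator $\mathcal{T}_{t,\gamma}$ itself. The endpoint cases $p=1$ and $p=\infty$ follow directly from Fubini--Tonelli together with the symmetry of $\mathbb{W}_\gamma$ and the normalization $\mathcal{T}_{t,\gamma}[\mathbf{1}]=\mathbf{1}$, while the intermediate range requires either Riesz--Thorin interpolation or a direct appeal to Levitan's classical theory. A minor interpretive point is that the factor $e^{au^2+du}$ written in (\ref{QFFBTtransl}) is to be read as the unimodular $e^{i(au^2+du)}$ — otherwise the integral would not even converge — and this reading is precisely what makes the $u$-phase cancellation in step one work.
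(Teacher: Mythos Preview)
Your proposal is correct and follows essentially the same path as the paper: both take moduli inside the integral, use that the phase factors are unimodular and the classical kernel $\mathbb{W}_\gamma$ is nonnegative with unit mass, and then conclude via the $L^p$-contractivity of the classical Bessel translation $\mathcal{T}_{t,\gamma}$. The only cosmetic difference is that you first write out the explicit factorization and then cite the classical contractivity, whereas the paper goes directly to the pointwise bound and re-derives the contractivity inline via H\"older's inequality together with the normalization $\int_0^\infty \mathbb{W}_\gamma(s,t,u)\,u^{2\gamma+1}\,du=1$; your reading of $e^{au^2+du}$ as $e^{i(au^2+du)}$ is indeed the intended one.
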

\begin{proof} For $p=1$ and $\infty$, the proof is trivial. Let us now consider $p\in (1,\infty)$ and $h$ belongs to $L^{p}_{\gamma}(\mathbb{R}_+)$. Then from the definition of the quadratic-phase Fourier--Bessel translation operator (\ref{QFFBTtransl}), we have
	
 \begin{eqnarray*}  \left|  \mathbb{T} ^{a,b,d}_{t,\gamma}[h](s)\right|  &=& \left| \int_0^\infty h(u) \mathbb{W}^{a,b,d}_{\gamma}(s,t,u) e^{au^2+du}u^{2\gamma+1} du\right| \\
 	&\leq&  \int_0^\infty |h(u)| \left|\mathbb{W}^{a,b,d}_{\gamma}(s,t,u)\right| u^{2\gamma+1} du,\\
 		&=&  \int_0^\infty |h(u)| \left|\mathbb{W}_{\gamma}(s,t,u)e^{-i\left[ a(s^2+t^2+u^2) +d(s+t+u)\right]}\right| u^{2\gamma+1} du\\
 		&=&  \int_0^\infty |h(u)| \left|\mathbb{W}_{\gamma}(s,t,u)\right| u^{2\gamma+1} du.
  \end{eqnarray*}	
 According to Hölder's inequality, and the fact that 
 \begin{equation*}
   \int_0^\infty \mathbb{W}_{\gamma}(s,t,u)\ u^{2\gamma+1} du=1,
 \end{equation*}
  we obtain that
  \begin{equation*}
  	\|\mathbb{T} ^{a,b,d}_{t,\gamma}[h]\|_{\gamma,p} \leq \|h\|_{\gamma,p}.
  \end{equation*}
  
\end{proof}

\section{Quadratic-phase Fourier--Bessel  convolution}\label{sec4}

\begin{definition}
	Let   \( a, b, c, d, e \in \mathbb{R}\) such that $b\neq 0$, and $ \gamma > -1/2$. For \(h, g \in L^1_{\gamma}(\mathbb{R}_+)\), we define the generalized convolution product associated with the quadratic-phase Fourier--Bessel transform by
\end{definition}

\begin{equation}\label{QFFBTconv}
h\star g(t)=  \int_0^\infty \mathbb{T} ^{a,b,d}_{t,\gamma}[h](s) g(s) e^{i(as^2+ds)}s^{2\gamma+1} ds,
\end{equation}
where $\mathbb{T} ^{a,b,d}_{t,\gamma}[h]$ is the generalized translation operators associated with the quadratic-phase Fourier--Bessel transform  given by equation (\ref{QFFBTtransl}).
 
The basic properties of the quadratic-phase Fourier--Bessel convolution product are collected in the following statement.

\begin{properties} Let   \( a, b, c, d, e \in \mathbb{R}\) such that $b\neq 0$, $ \gamma > -1/2$ and \(h, g, \upsilon \in L^1_{\gamma}(\mathbb{R}_+)\), then the quadratic-phase Fourier--Bessel convolution product satisfies the following properties:
	\begin{enumerate}
     \item Commutativity: $h\star g= g\star h$.
     \item  Associativity: $(h\star g)\star \upsilon =  h\star (g\star \upsilon)$.
	\end{enumerate}
\end{properties}

In the following, we state the Young's inequality for the quadratic-phase Fourier--Bessel convolution product.
\begin{proposition}
	Let   \( a, b, c, d, e \in \mathbb{R}\) such that $b\neq 0$, and $ \gamma > -1/2$. Assume that $1 \leq p, q, r \leq \infty$ and $\frac{1}{p}+\frac{1}{q}=\frac{1}{r}+1$. If \(h \in L^p_{\gamma}(\mathbb{R}_+)\) and \(g \in L^q_{\gamma}(\mathbb{R}_+)\), then $h\star g \in L^r_{\gamma}(\mathbb{R}_+)$ and

$$
\|h\star g\|_{\gamma,r} \leq\|h\|_{\gamma,p}\|g\|_{\gamma,q}.
$$

In particular, if $g, h \in L^1_{\gamma}(\mathbb{R}_+)$, then the  quadratic-phase Fourier--Bessel convolution product $h\star g $ is defined almost everywhere on $\mathbb{R}_+$ and it belongs to $L^1_{\gamma}(\mathbb{R}_+)$.
\end{proposition}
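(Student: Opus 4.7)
My plan is to reduce the quadratic-phase Young inequality to the classical Young inequality for the ordinary Fourier--Bessel convolution. The guiding observation is that every quadratic-phase exponential appearing in the kernel $\mathbb{W}^{a,b,d}_{\gamma}$, in the definition of $\mathbb{T}^{a,b,d}_{t,\gamma}$, and in the integrand of $h \star g$ has unit modulus, so these factors can only rearrange phases and cannot affect $L^p$ norms.

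First I would simplify $\mathbb{T}^{a,b,d}_{t,\gamma}[h](s)$ by inserting the explicit formula for $\mathbb{W}^{a,b,d}_{\gamma}$ and observing that the factor $e^{i(au^2+du)}$ sitting inside the integral exactly cancels the $u$-dependent piece of the phase $e^{-i[a(s^2+t^2+u^2)+d(s+t+u)]}$, leaving
\[
\mathbb{T}^{a,b,d}_{t,\gamma}[h](s) = e^{-i[a(s^2+t^2)+d(s+t)]}\,\mathcal{T}_{t,\gamma}[h](s),
\]
where $\mathcal{T}_{t,\gamma}$ is the classical Bessel translation introduced in Section \ref{sec3}. Substituting this identity into (\ref{QFFBTconv}), the $s$-dependent phase combines with $e^{i(as^2+ds)}$ and only the global factor $e^{-i(at^2+dt)}$ survives, so that
\[
h \star g(t) = e^{-i(at^2+dt)}\int_0^\infty \mathcal{T}_{t,\gamma}[h](s)\,g(s)\,s^{2\gamma+1}\,ds.
\]
Taking moduli yields the pointwise identity $|h\star g(t)| = |(h \ast_\gamma g)(t)|$, where $\ast_\gamma$ denotes the classical Fourier--Bessel convolution defined by $(h \ast_\gamma g)(t) = \int_0^\infty \mathcal{T}_{t,\gamma}[h](s)\,g(s)\,s^{2\gamma+1}\,ds$.

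The statement now follows directly from the well-known Young inequality for the classical Bessel convolution $\ast_\gamma$. This is the only genuinely non-trivial input, and I view it as the main obstacle: the quadratic-phase aspect of the problem becomes cosmetic, while the inequality for $\ast_\gamma$ must either be cited from standard Bessel harmonic analysis or reproved. If a self-contained proof is preferred, I would treat three endpoint cases separately and then interpolate. The case $(p,q,r)=(1,1,1)$ follows from Fubini together with the symmetry of $\mathbb{W}_\gamma$ in its three arguments and the normalization $\int_0^\infty \mathbb{W}_\gamma(s,t,u)\,u^{2\gamma+1}\,du = 1$; the case $(p,1,p)$ is exactly the continuity estimate (\ref{contQFFBTransl}) combined with Minkowski's integral inequality; the case $(p,p',\infty)$ is Hölder's inequality applied after the translation. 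Riesz--Thorin interpolation then covers every admissible triple $(p,q,r)$ with $1/p+1/q=1/r+1$, yielding the stated inequality and, as the $(1,1,1)$ special case, the a.e. definition and $L^1_{\gamma}$--membership of $h \star g$ when $h,g \in L^1_{\gamma}(\mathbb{R}_+)$.
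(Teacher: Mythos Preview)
Your argument is correct. The key identity $\mathbb{T}^{a,b,d}_{t,\gamma}[h](s)=e^{-i[a(s^2+t^2)+d(s+t)]}\,\mathcal{T}_{t,\gamma}[h](s)$ indeed follows from the definitions (the factor $e^{i(au^2+du)}$ in (\ref{QFFBTtransl}) cancels the $u$-part of the phase in $\mathbb{W}^{a,b,d}_{\gamma}$), and after inserting it into (\ref{QFFBTconv}) one obtains $|h\star g(t)|=|(h\ast_\gamma g)(t)|$ pointwise, so the inequality reduces verbatim to the classical Bessel Young inequality.

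This is a genuinely different route from the paper. The paper never makes this reduction explicit; instead it carries out the standard three-factor H\"older split directly on the quadratic-phase integrand, writing
\[
\bigl|\mathbb{T}^{a,b,d}_{t,\gamma}[h](s)\,g(s)\bigr|
=\bigl(|\mathbb{T}^{a,b,d}_{t,\gamma}[h]|^p|g|^q\bigr)^{1/r}\bigl(|\mathbb{T}^{a,b,d}_{t,\gamma}[h]|^p\bigr)^{1/p-1/r}\bigl(|g|^q\bigr)^{1/q-1/r},
\]
applying H\"older in $s$, invoking (\ref{contQFFBTransl}), and then integrating in $t$ with Fubini and the symmetry of the translation. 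Your approach is more conceptual and makes transparent that the quadratic-phase structure is purely cosmetic for norm inequalities; the paper's approach is more self-contained, proving the inequality from scratch without appealing to the classical Bessel result. Your interpolation sketch for a self-contained version is also fine, though note that the paper's direct H\"older argument already handles all admissible $(p,q,r)$ in one stroke, without needing endpoint cases plus Riesz--Thorin.
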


\begin{proof}
Let   \( a, b, c, d, e \in \mathbb{R}\) such that $b\neq 0$, and $ \gamma > -1/2$. We have
  
 \begin{eqnarray*}
  	 \left|\mathbb{T} ^{a,b,d}_{t,\gamma}[h](s) g(s) e^{i(as^2+ds)}\right| 
  	 = \left(\left|\mathbb{T} ^{a,b,d}_{t,\gamma}[h](s) \right|^p|g(s)|^q\right)^{\frac{1}{r}}\left(\left|\mathbb{T} ^{a,b,d}_{t,\gamma}[h](s) \right|^p\right)^{\frac{1}{p}-\frac{1}{r}}\left(|g(s)|^q\right)^{\frac{1}{q}-\frac{1}{q}}.
 \end{eqnarray*}
Therefore,
  
 \begin{eqnarray*}
  	 \int_0^\infty\left|\mathbb{T} ^{a,b,d}_{t,\gamma}[h](s)  e^{-i \frac{d}{b} y^2} g(s)\right| s^{2\gamma+1} ds
  	& \leq &\left(\int_0^\infty\left|\mathbb{T} ^{a,b,d}_{t,\gamma}[h](s) \right|^p|g(s)|^q s^{2\gamma+1} ds\right)^{\frac{1}{r}} \\
  	& &\times  \left(\int_0^\infty\left|\mathbb{T} ^{a,b,d}_{t,\gamma}[h](s) \right|^p s^{2\gamma+1} ds\right)^{\frac{r-p}{r p}} \\
    & &\times \left(\int_0^\infty|g(s)|^q s^{2\gamma+1} ds\right)^{\frac{r-q}{r q}}.
 \end{eqnarray*}
According to the inequality \ref{contQFFBTransl}, we get  
 
 \begin{eqnarray*}
 	\left|h\star g(t)\right|^r & \leq &\left\|\mathbb{T} ^{a,b,d}_{t,\gamma}[h]\right\|_{\gamma,p}^{r-p}\|g\|_{\gamma,q}^{r-q} \int_0^\infty\left|\mathbb{T} ^{a,b,d}_{t,\gamma}[h](s)\right|^p|g(s)|^q s^{2\gamma+1} ds \\
 	& \leq & \|h\|_{\gamma,p}^{r-p}\|g\|_{\gamma,q}^{r-q} \int_0^\infty\left|\mathbb{T} ^{a,b,d}_{t,\gamma}[h](s)\right|^p|g(s)|^q s^{2\gamma+1} ds.
 \end{eqnarray*}
 It follows that,
 
 \begin{eqnarray*}
 	\left\|h\star g\right\|_{\gamma,r}^r &\leq &  \|h\|_{\gamma,p}^{r-p}\|g\|_{\gamma,q}^{r-q} \int_0^\infty \int_0^\infty\left|\mathbb{T} ^{a,b,d}_{t,\gamma}[h](s)\right|^p|g(s)|^q 
 	 s^{2\gamma+1} ds t^{2\gamma+1} dt \\
 	&\leq &  \|h\|_{\gamma,p}^{r-p}\|g\|_{\gamma,q}^{r-q} \int_0^\infty|g(s)|^q \int_0^\infty\left|\mathbb{T} ^{a,b,d}_{s,\gamma}[h](t)\right|^p 
 	  t^{2\gamma+1} dt  s^{2\gamma+1} ds \\
 	&\leq & \|h\|_{\gamma,p}^{r-p}\|g\|_{\gamma,q}^{r-q}\|h\|_{\gamma,p}^p\|g\|_{\gamma,q}^q \\
 	&\leq & \|h\|_{\gamma,p}^r\|g\|_{\gamma,q}^r.
 \end{eqnarray*}
\end{proof}

\section{Donoho-Stark Uncertainty Principle}\label{sec5}

We aim in this subsection to prove a Donoho–Stark-type uncertainty principle for the quadratic-phase Fourier–Bessel transform. To do this, we begin by recalling the following definitions.

Consider two measurable subsets \(\mathcal{M}\) and \(\mathcal{N}\) of \(\mathbb{R}_+\). A function \( h \) is called \(\epsilon_\mathcal{M}\)-time-limited if we have

\[
\|h - \chi_\mathcal{M} h\|_{\gamma,2} \leq \epsilon_\mathcal{M},
\]
where

\[
\chi_\mathcal{M}(s) =
\begin{cases}
	1, & s \in \mathcal{M}, \\
	0, & \text{otherwise}.
\end{cases}
\]
Similarly, a function \(h\) is said to be \(\epsilon_\mathcal{N}\)-quadratic-phase Fourier--Bessel band-limited  if its quadratic-phase Fourier--Bessel transform \(\mathbb{B}^{a,b,c}_{d,e,\gamma}[h]\) is \(\epsilon_\mathcal{N}\)-limited, in other words

\[
\|\mathbb{B}^{a,b,c}_{d,e,\gamma}[h] - \chi_\mathcal{N} \mathbb{B}^{a,b,c}_{d,e,\gamma}[h]\|_{\gamma,2} \leq \epsilon_\mathcal{N}.
\]

Next, we introduce the  time-limiting operator  \(P_\mathcal{M}\) and the  frequency-limiting operator  \(Q_\mathcal{N}\), defined respectively as

\[
P_\mathcal{M} h(s) = \chi_\mathcal{M}(s)h(s),
\]
and

\[
Q_\mathcal{N}(h)(s) = \mathbb{B}^{-c,-b,-a}_{-e,-d,\gamma} \left( \chi_\mathcal{N}(.) \mathbb{B}^{a,b,c}_{d,e,\gamma}[h](.) \right)(s), \quad s \in \mathbb{R}_+.
\]
Note that \(\|P_\mathcal{M}\| = 1\).

We now state the Donoho-Stark uncertainty principle for the quadratic-phase Fourier--Bessel transform.

  \begin{theorem}
Let \( h \in L^2_{\gamma}(\mathbb{R}_+) \) be a function with \( \|h\|_{L^2_{\gamma}} = 1 \). Consider two measurable subsets \( \mathcal{M} \) and \( \mathcal{N} \) of \( \mathbb{R}_+ \). Assume that the following conditions hold:
  \begin{enumerate}
   \item  \(h\) is \(\epsilon_\mathcal{M}\)-time-limited,
    \item  \(h\) is \(\epsilon_\mathcal{N}\)-quadratic-phase Fourier--Bessel band-limited,
  \end{enumerate}
then, we have

\[ 
|\mathcal{M}||\mathcal{N}| \geq \frac{|b|^{2(\mu + 1)}}{c_\gamma^2} (1 - \epsilon_\mathcal{M} - \epsilon_\mathcal{N})^2,
\]
where \(c_\gamma\) is given by (\ref{constc_gamma}).
\end{theorem}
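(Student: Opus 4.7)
The overall strategy follows the classical Donoho--Stark template: show that the composition $Q_\mathcal{N} P_\mathcal{M}$ cannot shrink $h$ too much (this uses only the two concentration hypotheses and the Plancherel theorem), then bound the operator $Q_\mathcal{N} P_\mathcal{M}$ from above in terms of the measures $|\mathcal{M}|$ and $|\mathcal{N}|$ (this uses the $L^1\to L^\infty$ estimate of the Riemann--Lebesgue lemma together with Cauchy--Schwarz). Combining the two inequalities and squaring will yield the desired lower bound on $|\mathcal{M}||\mathcal{N}|$.

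\textbf{Step 1: Lower bound on $\|Q_\mathcal{N} P_\mathcal{M} h\|_{\gamma,2}$.} Using the reversibility property (Theorem \ref{invthmQF.FBT}) and the Plancherel identity (Theorem \ref{PlanchQPDT}), the operator $Q_\mathcal{N}$ is conjugate through the isometry $\mathbb{B}^{a,b,c}_{d,e,\gamma}$ to multiplication by $\chi_\mathcal{N}$, so it is an orthogonal projection with norm $\leq 1$, and
\[
\|h - Q_\mathcal{N} h\|_{\gamma,2} = \|\mathbb{B}^{a,b,c}_{d,e,\gamma}[h] - \chi_\mathcal{N}\mathbb{B}^{a,b,c}_{d,e,\gamma}[h]\|_{\gamma,2} \leq \epsilon_\mathcal{N}.
\]
Similarly $\|h - P_\mathcal{M} h\|_{\gamma,2} \leq \epsilon_\mathcal{M}$. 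A triangle inequality $\|h - Q_\mathcal{N} P_\mathcal{M} h\|_{\gamma,2} \leq \|h - Q_\mathcal{N} h\|_{\gamma,2} + \|Q_\mathcal{N}(h - P_\mathcal{M} h)\|_{\gamma,2}$ together with $\|Q_\mathcal{N}\| \leq 1$ gives $\|h - Q_\mathcal{N} P_\mathcal{M} h\|_{\gamma,2} \leq \epsilon_\mathcal{M} + \epsilon_\mathcal{N}$. Since $\|h\|_{\gamma,2} = 1$, this yields
\[
\|Q_\mathcal{N} P_\mathcal{M} h\|_{\gamma,2} \geq 1 - \epsilon_\mathcal{M} - \epsilon_\mathcal{N}.
\]

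\textbf{Step 2: Upper bound on $\|Q_\mathcal{N} P_\mathcal{M} h\|_{\gamma,2}$.} Using Plancherel once more,
\[
\|Q_\mathcal{N} P_\mathcal{M} h\|_{\gamma,2}^{2} = \|\chi_\mathcal{N}\,\mathbb{B}^{a,b,c}_{d,e,\gamma}[\chi_\mathcal{M} h]\|_{\gamma,2}^{2} = \int_{\mathcal{N}} \bigl|\mathbb{B}^{a,b,c}_{d,e,\gamma}[\chi_\mathcal{M} h](t)\bigr|^{2} t^{2\gamma+1}\,dt.
\]
Bound the integrand uniformly by the Riemann--Lebesgue estimate \eqref{RLLQFFT}:
\[
\bigl\|\mathbb{B}^{a,b,c}_{d,e,\gamma}[\chi_\mathcal{M} h]\bigr\|_{\gamma,\infty} \leq \frac{c_\gamma}{|b|^{\gamma+1}}\,\|\chi_\mathcal{M} h\|_{\gamma,1},
\]
and then use Cauchy--Schwarz together with $\|h\|_{\gamma,2}=1$ to get $\|\chi_\mathcal{M} h\|_{\gamma,1} \leq |\mathcal{M}|^{1/2}$. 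Inserting this and the factor $|\mathcal{N}| = \int_\mathcal{N} t^{2\gamma+1}dt$ produces
\[
\|Q_\mathcal{N} P_\mathcal{M} h\|_{\gamma,2}^{2} \leq \frac{c_\gamma^{2}}{|b|^{2(\gamma+1)}}\,|\mathcal{M}|\,|\mathcal{N}|.
\]

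\textbf{Step 3: Conclusion.} Squaring the lower bound from Step 1 and comparing with the upper bound from Step 2 gives
\[
(1 - \epsilon_\mathcal{M} - \epsilon_\mathcal{N})^{2} \leq \frac{c_\gamma^{2}}{|b|^{2(\gamma+1)}}\,|\mathcal{M}|\,|\mathcal{N}|,
\]
which rearranges to the claimed inequality (reading $\mu = \gamma$ in the statement). I do not expect a serious obstacle here: all analytic tools---Plancherel, reversibility, and the Riemann--Lebesgue bound---are already available in Sections \ref{sec2}, and the argument is a transcription of the standard Donoho--Stark reasoning into the Bessel-weighted setting. The only minor delicate point is verifying that $Q_\mathcal{N}$ is indeed a contraction, which follows since $\mathbb{B}^{-c,-b,-a}_{-e,-d,\gamma}$ is the $L^2$-inverse of $\mathbb{B}^{a,b,c}_{d,e,\gamma}$ and multiplication by $\chi_\mathcal{N}$ has norm $\leq 1$.
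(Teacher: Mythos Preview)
Your proof is correct and follows the same Donoho--Stark template as the paper, with two cosmetic differences worth noting. First, you work with $Q_\mathcal{N} P_\mathcal{M}$ while the paper uses $P_\mathcal{M} Q_\mathcal{N}$; the lower-bound step is otherwise identical. Second, for the upper bound the paper computes the Hilbert--Schmidt norm of $P_\mathcal{M} Q_\mathcal{N}$ directly from the kernel bound \eqref{ModuleQFDTK}, obtaining
\[
\|P_\mathcal{M} Q_\mathcal{N}\|_{HS}^{2}=\left|\frac{c_\gamma^{2}}{(ib)^{2\gamma+2}}\right|\int_{\mathcal{M}}\int_{\mathcal{N}}\bigl|\mathbb{J}^{a,b,c}_{d,e,\gamma}(s,t)\bigr|^{2}\,t^{2\gamma+1}s^{2\gamma+1}\,dt\,ds\leq \frac{c_\gamma^{2}}{|b|^{2(\gamma+1)}}|\mathcal{M}||\mathcal{N}|,
\]
and then invokes $\|\cdot\|_{op}\leq\|\cdot\|_{HS}$. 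Your route instead bounds $\|Q_\mathcal{N}P_\mathcal{M}h\|_{\gamma,2}$ directly via the Riemann--Lebesgue $L^{1}\!\to\!L^{\infty}$ estimate and Cauchy--Schwarz on $\|\chi_\mathcal{M}h\|_{\gamma,1}$. The two computations are equivalent in content (both ultimately rest on $|\mathbb{J}^{a,b,c}_{d,e,\gamma}|\leq 1$) and produce the identical numerical bound; your version is slightly more elementary in that it avoids the Hilbert--Schmidt formalism, while the paper's version yields a genuine operator-norm estimate independent of the particular $h$.
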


\begin{proof}

We begin by expressing the unit 2-norm function \(h\) as

\[
h = P_\mathcal{M} h + P_{\mathcal{M}^c} h = P_\mathcal{M} Q_\mathcal{N} h + P_\mathcal{M} Q_{\mathcal{N}^c} h + P_{\mathcal{M}^c} h.
\]
According to  the triangle inequality, we get

\begin{eqnarray}\label{QFFBTeq5.1}
\|h - P_\mathcal{M} Q_\mathcal{N} h\|_{\gamma,2} &=& \|P_\mathcal{M} Q_{\mathcal{N}^c} h + P_{\mathcal{M}^c} h\|_{\gamma,2} 
\nonumber\\
&\leq&\|P_\mathcal{M} Q_{\mathcal{N}^c} h\|_{\gamma,2} + \|P_{\mathcal{M}^c} h\|_{\gamma,2} 
\nonumber\\
&\leq& \epsilon_\mathcal{N} + \epsilon_\mathcal{M}.
\end{eqnarray}
Applying the triangle inequality again, we get

\begin{eqnarray}\label{QFFBTeq5.2}
\|h - P_\mathcal{M} Q_\mathcal{N} h\|_{\gamma,2} \geq \|h\|_{\gamma,2} - \|P_\mathcal{M} Q_\mathcal{N} h\|_{\gamma,2}.
\end{eqnarray}
By merging inequalities (\ref{QFFBTeq5.1}) and (\ref{QFFBTeq5.2}), we get

\[
1 - \|P_\mathcal{M} Q_\mathcal{N} h\|_{\gamma,2} \leq \epsilon_\mathcal{M} + \epsilon_\mathcal{N},
\]
that is,

\begin{eqnarray}\label{QFFBTeq5.3}
\|P_\mathcal{M} Q_\mathcal{N} h\|_{\gamma,2} \geq 1 - \epsilon_\mathcal{M} - \epsilon_\mathcal{N}.
\end{eqnarray}
Alternatively, we must establish an upper limit for the Hilbert-Schmidt norm of \(P_\mathcal{M} Q_\mathcal{N}\).

\[
\|P_\mathcal{M} Q_\mathcal{N}\|_{HS}^2 = \left|\frac{c_\gamma^2}{(i b)^{2\gamma+2}}\right| \int_\mathcal{M} \int_\mathcal{N} \left|\mathbb{J}^{a,b,c}_{d,e,\gamma}(s,t)\right|^2 t^{2\gamma+1} |s|^{2\gamma+1} dt ds.
\]
According to inequality (\ref{ModuleQFDTK}), we get

\[
\|P_\mathcal{M} Q_\mathcal{N}\|_{HS}^2 \leq \left|\frac{c_\gamma^2}{(i b)^{2\gamma+2}}\right| |\mathcal{M}||\mathcal{N}|.
\]
From classical results in functional analysis and the above inequality, we get

\[
\|P_\mathcal{M} Q_\mathcal{N}\|_{\gamma,2}^2 \leq \|P_\mathcal{M} Q_\mathcal{N}\|_{HS}^2 \leq \alpha_\gamma|\mathcal{M}||\mathcal{N}|,
\]
that is,

\begin{eqnarray}\label{QFFBTeq5.4}
\|P_\mathcal{M} Q_\mathcal{N}\|_{\gamma,2} \leq \left|\frac{c_\gamma}{(i b)^{\gamma+1}}\right| \sqrt{|\mathcal{M}||\mathcal{N}|}.
\end{eqnarray}
Finally, combining the inequalities (\ref{QFFBTeq5.3}) and (\ref{QFFBTeq5.4}) we get

\[ 
|\mathcal{M}||\mathcal{N}| \geq \frac{|b|^{2(\mu + 1)}}{c_\gamma^2} (1 - \epsilon_\mathcal{M} - \epsilon_\mathcal{N})^2.
\]

\end{proof}

In the following theorem, we provide another concentration-type uncertainty principle for quadratic-phase Fourier--Bessel transform for functions in \(L^1_{\gamma}(\mathbb{R}_+) \cap L^p_{\gamma}(\mathbb{R}_+)\).

\begin{definition}
Let \(\mathcal{M}\) and \(\mathcal{N}\) be two measurable subsets of \(\mathbb{R}_+\), and let \(h\) be a function in \(L^p_{\gamma}(\mathbb{R}_+)\), where \(1 < p \leq 2\). We say:
\begin{enumerate}
 \item \(h\) is \(\epsilon_\mathcal{M}\)  time-limited  in \(L^p_{\gamma}(\mathbb{R}_+)\)-norm if

\[
\|h - \chi_\mathcal{M} h\|_{\gamma,p} \leq \epsilon_\mathcal{M} \|h\|_{\gamma,p}.
\]

\item     \(h\) is  \(\epsilon_\mathcal{N}\)-quadratic-phase Fourier--Bessel  band-limited \(L^q_{\gamma}(\mathbb{R}_+)\)-norm, where \(q = p/(p-1)\), if

\[
\|\mathbb{B}^{a,b,c}_{d,e,\gamma}[h] - \chi_\mathcal{N} \mathbb{B}^{a,b,c}_{d,e,\gamma}[h]\|_{\gamma,q} \leq \epsilon_\mathcal{N} \|\mathbb{B}^{a,b,c}_{d,e,\gamma}[h]\|_{\gamma,q}.
\]
\end{enumerate}
\end{definition}

\begin{theorem}
Let \(\mathcal{M}\) and \(\mathcal{N}\) be two measurable subsets of \(\mathbb{R}_+\), and let \(h\) be a function in \(L^1_{\gamma}(\mathbb{R}_+) \cap L^p_{\gamma}(\mathbb{R}_+)\), where \(1 < p \leq 2\). Under the following assumptions:
\begin{enumerate}
 \item \(h\) is \(\epsilon_\mathcal{M}\)-time-limited in \(L^1_{\gamma}(\mathbb{R}_+)\)-norm,
\item  \(h\) is \(\epsilon_\mathcal{N}\)-quadratic-phase Fourier--Bessel  band-limited in \(L^q_{\gamma}(\mathbb{R}_+)\)-norm, where \(q = p/(p-1)\),
\end{enumerate}
we have

\[
\left(1 - \epsilon_\mathcal{M}\right)\left(1 - \epsilon_\mathcal{N}\right)\left\|\mathbb{B}^{a,b,c}_{d,e,\gamma}[h]\right\|_{\gamma,q} \leq \frac{c_\gamma} {|b|^{\gamma+1}}  |\mathcal{M}|^{1/p} |\mathcal{N}|^{1/q} \|h\|_{\gamma,q}.
\]
\end{theorem}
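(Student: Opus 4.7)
The plan is to run a direct chain-of-inequalities argument in the style of the classical Donoho--Stark proof: I will bound $\|\mathbb{B}^{a,b,c}_{d,e,\gamma}[h]\|_{\gamma,q}$ from above by a product of factors coming from the measure of $\mathcal{N}$, the Riemann--Lebesgue constant, the measure of $\mathcal{M}$, and the $L^q$-norm of $h$, while the two $\varepsilon$-concentration hypotheses each contribute a $(1-\varepsilon)$ factor on the left. The reasoning unfolds in five short steps and the only real decision is how to pair Hölder exponents.

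First I would use the band-limiting hypothesis and the triangle inequality on the frequency side to obtain
\[
(1-\epsilon_{\mathcal{N}})\,\|\mathbb{B}^{a,b,c}_{d,e,\gamma}[h]\|_{\gamma,q} \;\leq\; \|\chi_{\mathcal{N}}\,\mathbb{B}^{a,b,c}_{d,e,\gamma}[h]\|_{\gamma,q}.
\]
The restriction to a set of finite measure, together with the pointwise bound $|\chi_{\mathcal{N}}f|\leq\|f\|_{\gamma,\infty}$, yields
\[
\|\chi_{\mathcal{N}}\,\mathbb{B}^{a,b,c}_{d,e,\gamma}[h]\|_{\gamma,q} \;\leq\; |\mathcal{N}|^{1/q}\,\|\mathbb{B}^{a,b,c}_{d,e,\gamma}[h]\|_{\gamma,\infty}.
\]
I would then invoke the Riemann--Lebesgue estimate (\ref{RLLQFFT}) to cross from frequency to time,
\[
\|\mathbb{B}^{a,b,c}_{d,e,\gamma}[h]\|_{\gamma,\infty} \;\leq\; \frac{c_\gamma}{|b|^{\gamma+1}}\,\|h\|_{\gamma,1}.
\]

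On the time side, applying the triangle inequality to the $L^1_\gamma$-time-limiting hypothesis gives $(1-\epsilon_{\mathcal{M}})\|h\|_{\gamma,1}\leq \|\chi_{\mathcal{M}}h\|_{\gamma,1}$. The step that produces the exponent $1/p$ is Hölder's inequality applied to the pair $(\chi_{\mathcal{M}}, h)$ against the measure $s^{2\gamma+1}ds$ with conjugate exponents $(p,q)$:
\[
\|\chi_{\mathcal{M}}h\|_{\gamma,1} \;=\; \int_0^\infty \chi_{\mathcal{M}}(s)\,|h(s)|\,s^{2\gamma+1}\,ds \;\leq\; |\mathcal{M}|^{1/p}\,\|h\|_{\gamma,q}.
\]
Concatenating these five inequalities and dividing by $(1-\epsilon_{\mathcal{M}})$ produces the asserted bound directly.

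There is no real obstacle in the argument; it is essentially bookkeeping. The only subtle point is the choice of exponents in the final Hölder step: pairing $\chi_{\mathcal{M}}\in L^p_\gamma$ with $h\in L^q_\gamma$ (using the same conjugate pair $1/p+1/q=1$ fixed by the hypotheses) is precisely what produces the factor $|\mathcal{M}|^{1/p}$ and the norm $\|h\|_{\gamma,q}$ on the right-hand side. The assumption $h\in L^1_\gamma(\mathbb{R}_+)\cap L^p_\gamma(\mathbb{R}_+)$ guarantees that the $L^1$-norm appearing in the Riemann--Lebesgue step is finite, and the $L^q$-finiteness of $\mathbb{B}^{a,b,c}_{d,e,\gamma}[h]$ is implicit in the band-limiting hypothesis making sense.
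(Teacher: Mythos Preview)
Your proposal is correct and follows essentially the same argument as the paper: both proofs use the band-limiting hypothesis together with the triangle inequality to isolate $\|\chi_{\mathcal{N}}\,\mathbb{B}^{a,b,c}_{d,e,\gamma}[h]\|_{\gamma,q}$, bound it by $|\mathcal{N}|^{1/q}\|\mathbb{B}^{a,b,c}_{d,e,\gamma}[h]\|_{\gamma,\infty}$, invoke the Riemann--Lebesgue estimate (\ref{RLLQFFT}) to pass to $\|h\|_{\gamma,1}$, and then apply the time-limiting hypothesis plus H\"older with exponents $(p,q)$ to reach $|\mathcal{M}|^{1/p}\|h\|_{\gamma,q}$. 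The order of presentation and the bookkeeping are essentially identical.
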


\begin{proof}
Suppose that \(|\mathcal{M}| < \infty\) and \(|\mathcal{N}| < \infty\). Let \(h \in L^1_{\gamma}(\mathbb{R}_+) \cap L^p_{\gamma}(\mathbb{R}_+)\), where \(1 < p \leq 2\). Since \(h\) is \(\epsilon_\mathcal{N}\)-quadratic-phase Fourier--Bessel  band-limited in \(L^q_{\gamma}(\mathbb{R}_+)\)-norm, we have

\begin{eqnarray*}
\left\|\mathbb{B}^{a,b,c}_{d,e,\gamma}[h]\right\|_{\gamma,q} &\leq& \epsilon_\mathcal{N} \left\|\mathbb{B}^{a,b,c}_{d,e,\gamma}[h]\right\|_{\gamma,q} + \left(\int_{\mathcal{N}} \left|\mathbb{B}^{a,b,c}_{d,e,\gamma}[h](t)\right|^q t^{2\gamma+1} dt \right)^{1/q}
\\
&\leq&  \epsilon_\mathcal{N} \left\|\mathbb{B}^{a,b,c}_{d,e,\gamma}[h]\right\|_{\gamma,q} +
 |\mathcal{N}|^{1/q} \left\|\mathbb{B}^{a,b,c}_{d,e,\gamma}[h]\right\|_{\gamma,\infty}
\end{eqnarray*}
According to the Riemann–Lebesgue lemma  of the quadratic-phase Fourier--Bessel transform (\ref{RLLQFFT}), we get

\begin{eqnarray}\label{QFFBTeq5.5}
\left\|\mathbb{B}^{a,b,c}_{d,e,\gamma}[h]\right\|_{\gamma,q} \leq \frac{c_\gamma} {|b|^{\gamma+1}} \frac{|\mathcal{N}|^{1/q}}{1 - \epsilon_\mathcal{N}} \|h\|_{\gamma,1}.
\end{eqnarray}
On the other hand, since \(h\) is \(\epsilon_\mathcal{M}\)-time-limited in \(L^1_{\gamma}(\mathbb{R}_+)\)-norm,

\begin{eqnarray*}
\|h\|_{\gamma,1} &\leq& \epsilon_{\mathcal{M}} \|h\|_{\gamma,1} + \int_{\mathcal{M}} |h(t)| t^{2\gamma+1} dt
\\
&\leq& \epsilon_{\mathcal{M}} \|h\|_{\gamma,1} + |\mathcal{M}|^{1/p} \|h\|_{\gamma,q}.
\end{eqnarray*}
Therefore,

\begin{eqnarray}\label{QFFBTeq5.6}
\|h\|_{\gamma,1} \leq \frac{|\mathcal{M}|^{1/p}}{1 - \epsilon_\mathcal{M}} \|h\|_{\gamma,q}.
\end{eqnarray}
According to inequalities (\ref{QFFBTeq5.5}) and (\ref{QFFBTeq5.6}), we obtain the desired result.

\end{proof}

\begin{remark}
The above concentration-type uncertainty principle of the quadratic-phase Fourier--Bessel transform depends on the signal \(h\). However, for the special case \(p = q = 2\), we obtain

\[
\left(1 - \epsilon_\mathcal{M}\right)\left(1 - \epsilon_\mathcal{N}\right) \leq \sqrt{\alpha_\gamma} |\mathcal{M}|^{1/2} |\mathcal{N}|^{1/2}.
\]
\end{remark}

\section*{Conclusion}
In this work, we introduced the quadratic-phase Fourier–Bessel transform and established its fundamental properties, including continuity, the Riemann–Lebesgue lemma, reversibility, and Parseval's identity. We also defined the associated translation operator and convolution product, and studied their key properties. Finally, we proved a Donoho–Stark-type uncertainty principle in the context of the quadratic-phase Fourier–Bessel transform.

\section*{Future Works}

The research presented here opens several directions that we are actively exploring:
\subsubsection*{Wavelet Transform in the Quadratic-Phase Fourier–Bessel Framework}
We are currently developing wavelet transforms associated with the quadratic-phase Fourier–Bessel transform, focusing on admissible wavelets, reconstruction formulas, and localization properties.
\subsubsection*{Uncertainty Principles}
We are extending the Donoho–Stark-type result obtained here to other forms of uncertainty principles, such as Hardy, Beurling, and Morgan-type inequalities, adapted to the quadratic-phase Fourier–Bessel domain.
\subsubsection*{Wavelet Packets}
We are investigating wavelet packet constructions in this framework, aiming to develop flexible representations, efficient decomposition schemes, and sparse signal approximations.
\subsubsection*{Calderón’s Reproducing Formula}
We are working on Calderón-type reproducing formulas in the quadratic-phase Fourier–Bessel setting to enable explicit reconstruction schemes, frame constructions, and continuous or discrete analysis.

These topics are currently under active study, and we anticipate that their full development will provide new theoretical tools and applications in harmonic analysis and signal processing within generalized Bessel domains.

\section*{Acknowledgments} The second author extends his appreciation to the Deanship of Scientific Research at Northern Border University, Arar, KSA for funding this research work under Project NO. “NBU-FFR-2025-457-xx”.

\section*{Disclosure statement}
No potential conflict of interest was reported by the author.

\bibliographystyle{abbrv}
\bibliography{bib.QFFBT}
		
	\end{document}